\documentclass[sigconf]{acmart}

\usepackage{url}
\usepackage{kantlipsum, lipsum}
\usepackage{dsfont}
\usepackage{subfigure}
\usepackage{subcaption}
\usepackage{bbm}
\usepackage{wrapfig}
\usepackage{enumitem}
\usepackage{microtype}
\usepackage[dvipsnames, table]{xcolor}
\usepackage{amsmath}
\usepackage{mathtools}
\usepackage{hyperref}
\usepackage[capitalize,noabbrev,nameinlink]{cleveref}
\usepackage{multirow}
\newtheorem{theorem}{Theorem}
\newtheorem{innercustomthm}{Theorem}
\newenvironment{customthm}[1]
  {\renewcommand\theinnercustomthm{#1}\innercustomthm}
  {\endinnercustomthm}
\newtheorem{assumption}{Assumption}
\definecolor{rowgrayA}{RGB}{228,228,228}
\definecolor{rowgrayB}{RGB}{242,242,242}
\definecolor{lightblueRow}{RGB}{227,241,250}
\definecolor{lightyellowRow}{RGB}{252,248,220}
\usepackage{etoolbox}
\definecolor{TrieFigPink}{RGB}{240,140,138}
\definecolor{TrieFigBlue}{RGB}{178,200,225}

\crefname{appendix}{Appendix}{Appendices}
\Crefname{appendix}{Appendix}{Appendices}

\makeatletter
\pretocmd\appendix{%
  \crefalias{section}{appendix}%
}{}{}
\makeatother

\usepackage{booktabs}
\usepackage[export]{adjustbox}
\usepackage{makecell}
\usepackage{algorithm}
\usepackage[noend]{algpseudocode}
\usepackage{xspace}
\usepackage{svg}

\usepackage[most,skins,theorems]{tcolorbox}
\tcbset{
  aibox/.style={
    width=\linewidth,
    top=8pt,
    bottom=4pt,
    colback=blue!6!white,
    colframe=black,
    colbacktitle=black,
    enhanced,
    center,
    attach boxed title to top left={yshift=-0.1in,xshift=0.15in},
    boxed title style={boxrule=0pt,colframe=white,},
  }
}
\newtcolorbox{AIbox}[2][]{aibox,title=#2,#1}

\newtcolorbox{prefbox}[1][]{
  enhanced,
  colback=blue!5,
  colframe=blue!40,
  boxrule=1pt,
  arc=3mm,
  left=8pt,
  right=8pt,
  top=8pt,
  bottom=8pt,
  title=#1,
  colbacktitle=blue!35,
  coltitle=white,
  fonttitle=\bfseries,
  attach boxed title to top left={xshift=0mm,yshift=0mm},
  boxed title style={
    sharp corners,
    rounded corners=southeast,
    colframe=blue!40,
    colback=blue!35,
    boxrule=0pt
  }
}
\AtBeginDocument{%
  }

\copyrightyear{2026}
\acmYear{2026}
\setcopyright{cc}
\setcctype{by}
\acmConference[RecSys '26]{20th ACM Conference on Recommender Systems}{September 27-October 02, 2026}{Minneapolis, MN, USA}
\acmBooktitle{20th ACM Conference on Recommender Systems (RecSys '26), September 27-October 02, 2026, Minneapolis, MN, USA}
\acmDOI{10.1145/3773078.3831786}
\acmISBN{979-8-4007-2284-4/2026/09}

\newcommand{\methodemoji}{\textsc{BONSAI} \includegraphics[width=0.015\textwidth]{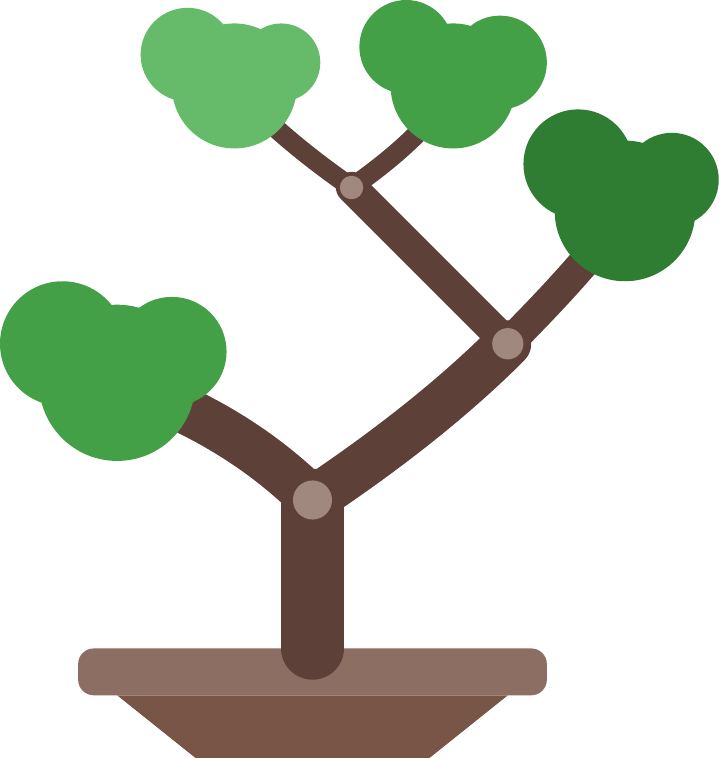}\xspace}

\begin{document}

\title[Optimizing Textual Decoding Tries in LLM-based Generative Recommendation]{Beyond Fixed Depths and Widths: Optimizing Textual Decoding Tries in LLM-based Generative Recommendation}

\author{Jingzhe Liu}
\email{liujin33@msu.edu}
\affiliation{%
  \institution{Michigan State University}
  \city{East Lansing}
  \state{Michigan}
  \country{USA}
}

\author{Hanbing Wang}
\email{wangh137@msu.edu}
\affiliation{%
  \institution{Michigan State University}
  \city{East Lansing}
  \state{Michigan}
  \country{USA}
}

\author{Jiliang Tang}
\email{tangjili@msu.edu}
\affiliation{%
  \institution{Michigan State University}
  \city{East Lansing}
  \state{Michigan}
  \country{USA}
}

\author{Liam Collins}
\email{lcollins2@snapchat.com}
\affiliation{%
  \institution{Snap Inc.}
  \city{Bellevue}
  \state{Washington}
  \country{USA}
  }

\author{Tong Zhao}
\email{tong@snap.com}
\affiliation{%
  \institution{Snap Inc.}
  \city{Bellevue}
  \state{Washington}
  \country{USA}
  }

\author{Neil Shah}
\email{nshah@snap.com}
\affiliation{%
  \institution{Snap Inc.}
  \city{Bellevue}
  \state{Washington}
  \country{USA}
  }

\author{Mingxuan Ju}
\email{mju@snap.com}
\affiliation{%
  \institution{Snap Inc.}
  \city{Bellevue}
  \state{Washington}
  \country{USA}
  }

\renewcommand{\shortauthors}{Jingzhe Liu et al.}

\begin{abstract}
Generative recommendation (GR) is an increasingly popular paradigm in recommender systems, with a prominent line of work using LLMs as autoregressive backbones to predict the next item's term IDs (e.g., titles or keywords).
The success of autoregressive generation hinges on constrained beam search over a \emph{decoding trie} to ensure that generated outputs correspond to valid items. 
However, current research predominantly focuses on generating more comprehensive term IDs to describe items, while largely neglecting the structural design of the decoding trie formed by these terms.
This can lead to a trie that is poorly suited to beam search, which degrades performance.
To address this, we examine the effectiveness of term IDs from the perspective of decoding trie optimization. 
Through empirical and theoretical analyses, we identify two desirable properties for a highly performant trie: \emph{(1) adaptive and variable ID length}, enabling items with varying semantic richness to be represented by IDs of appropriate lengths, and \emph{(2) constrained branching factors}, especially at shallow levels, which drastically improves the success rate of constrained beam search. 
Motivated by these properties, we introduce \textbf{\methodemoji}: \textbf{B}ranching-\textbf{O}ptimized \textbf{N}ode \textbf{S}tructure for \textbf{A}daptive \textbf{I}dentifiers, a novel framework that co-designs textual term IDs and their underlying decoding trie. 
BONSAI extracts recommendation-informative words from item metadata and employs a minimum set cover formulation to recursively build a trie that satisfies the above properties. 
Experiments reveal that BONSAI achieves up to a 21.6\% relative improvement over state-of-the-art baselines. 
Further analyses confirm the crucial role of our proposed properties, and demonstrate their generalizability to be applied to enhance the performance of other term ID methods.
Our codes can be accessed through the link: \textcolor{blue}{\url{https://github.com/Liu-Jingzhe/BONSAI}}.
\end{abstract}

\begin{CCSXML}
<ccs2012>
   <concept>
       <concept_id>10002951.10003317.10003347.10003350</concept_id>
       <concept_desc>Information systems~Recommender systems</concept_desc>
       <concept_significance>500</concept_significance>
       </concept>
 </ccs2012>
\end{CCSXML}

\ccsdesc[500]{Information systems~Recommender systems}

%
\keywords{Recommendation Systems, Generative Recommendation, Large Language Models, Sequential Recommendation}


\maketitle

\section{Introduction}\label{sec:intro}

Generative recommendation (GR)~\citep{TIGER,HSTU,OneRec} has emerged as a popular paradigm in modern recommender systems. Unlike conventional frameworks that represent each interaction event with a single atomic ID~\citep{huang2020embedding,SASRec,BPR,GRU4Rec,ju2024does,ju2025revisiting,ju2025learning}, GR models an event as a sequence of multiple IDs. In particular, GR based on semantic IDs encodes each event as a sequence of low-cardinality IDs derived from item representations~\citep{TIGER,RQ22,OneRec,ju2025generative,chen2024enhancing}, whereas GR based on term~(textual) IDs represents each event as a sequence of natural-language tokens, such as titles, metadata, or keywords~\citep{LLM4GenRec,P5,how-to-index,liu2025understanding,rec_ID_survey}. At inference time, GR autoregressively generates the ID sequence of the next event, resembling the generation process of large language models (LLMs)~\citep{GPT4,qwen3}. This formulation naturally inherits key advances from LLMs, including architectures~\citep{OneRec-v2,ultra-hstu}, tokenization~\citep{hou2025actionpiece}, and scaling-law behaviors~\citep{liu2025understanding}.

\begin{figure}[t]
\begin{center}
\includegraphics[width=0.5\textwidth]{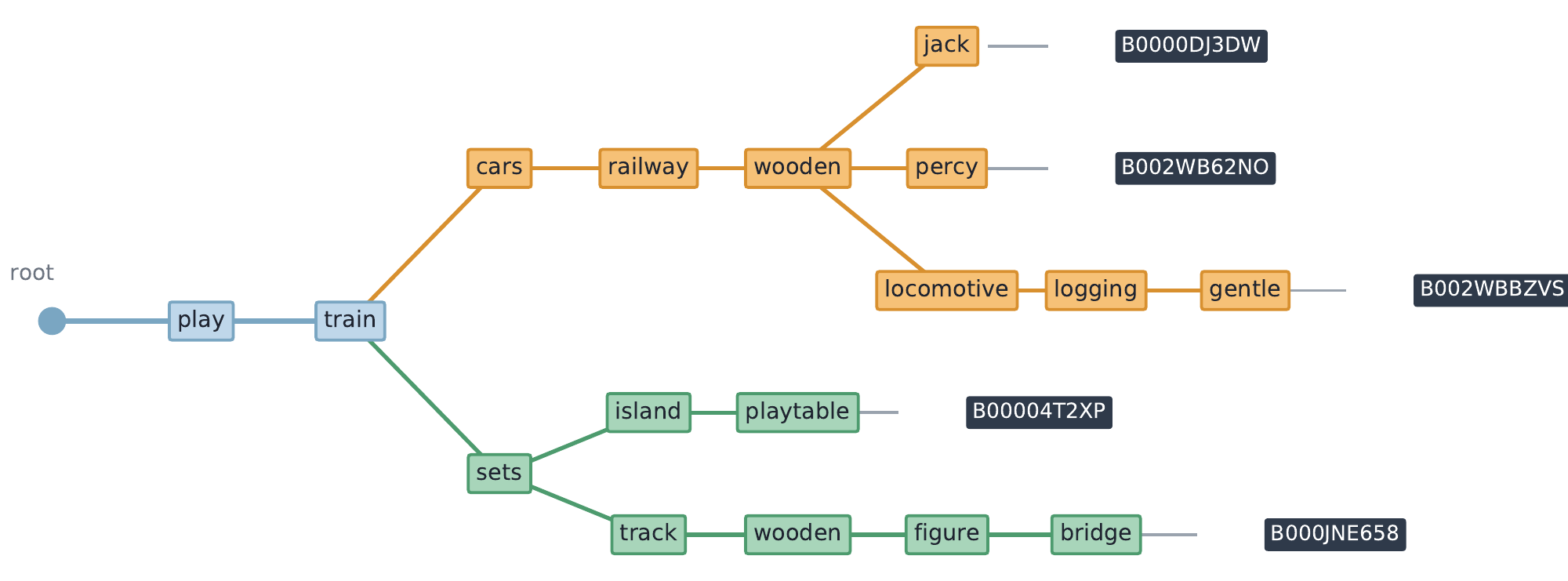}
\end{center}
\vspace{-1.5em}
\caption{A segment of the trie for variable-length term IDs.
}
\label{fig:trie example}
\vspace{-2em}
\end{figure}

For GR based on term IDs, a prominent line of work directly adopts pretrained LLMs as backbones and fine-tunes them on recommendation data to generate item descriptions autoregressively~\citep{openonerec,how-to-index,liu2025understanding,TID,agentictagger,ju2026semantic}. This allows the model to leverage the world knowledge encoded in LLMs, improving both item understanding and user preference modeling~\citep{onerec-think,plum}. However, unlike open-ended text generation, GR requires \emph{grounding}: the model must generate only valid items from a fixed corpus. To ensure this, LLM-based GR relies on \textit{constrained beam search over a decoding trie}~(an example shown in \cref{fig:trie example}) constructed from the tokenized term IDs of all candidate items~\citep{decodingmatters,vectorizing_beam_search,ju2025generative}. Despite the trie directly defining the search space and guiding decoding, its structural properties remain under-explored. Existing studies mainly focus on improving the semantic quality of term IDs~\citep{TID,agentictagger}, while overlooking the fact that semantically meaningful IDs may still induce a trie structure unfavorable to beam search, such as excessive branching near the root, thereby degrading recommendation performance (as shown in \cref{subsec:trie_ablation} and \cref{subsec:optimize_tid} later). Motivated by this gap, we shift the focus from term ID semantics alone to their impact on the induced trie structure, and ask the following question:
\begin{center}
\textbf{\textit{How to co-design term IDs and their decoding trie?}}
\end{center}
\noindent To answer this question, we empirically and theoretically analyze the structure of different decoding tries. 
We identify two desirable properties:

\begin{enumerate}[leftmargin=*,topsep=0pt]

\item \textbf{Adaptive and variable ID length.} 
As illustrated in \cref{fig:word_dis}, the metadata length of items across datasets varies drastically, ranging from fewer than 10 words to more than hundreds of words. Conversely, existing methods~\citep{TIGER,minionerec,TID} typically enforce rigid and uniform token lengths for all item IDs, which inevitably introduces information loss. 
Forcing a uniform structure is both mathematically and empirically suboptimal. 
Furthermore, since an item's ID is equivalent to its corresponding root-to-leaf path in the trie, this property also indicates that \textit{the trie should have variable depths}.

\item \textbf{Constrained branching factors, especially in shallow levels.} 
Constrained beam search's success relies heavily on minimizing the branching factors at early decision nodes. 
During decoding, the generative model must decide which trie paths to explore, as shown in \cref{fig:pre_on_branch}.
If the branching factor (the average number of children) is excessively high at shallow levels, the model's likelihood of selecting the correct path drops significantly.  
\end{enumerate}

\noindent Based on these two properties, we propose a simple yet effective method for constructing search-trie-aware term IDs, dubbed \methodemoji, \textbf{B}ranching-\textbf{O}ptimized \textbf{N}ode \textbf{S}tructure for \textbf{A}daptive \textbf{I}dentifiers, a novel framework that co-designs term IDs and their underlying decoding trie. 
Similar to existing works~\citep{IDGenRec,TID,agentictagger}, we first filter the metadata (i.e., raw textual descriptions) to retain only highly informative terms. 
We then treat each remaining term as a feature, representing each item as a set of such features.
Next, we construct a trie by assigning features to internal nodes; each leaf node corresponds to an item containing all the features along the path from the root. 

The key novelty of BONSAI is that we formulate this construction process as a recursive set cover problem: for each internal node, we select a minimal set of features to split the current node into child nodes that together cover all items associated with that node. 
By explicitly minimizing the cover set, we guarantee that the branching factor at each level remains optimally low. 
Simultaneously, items with richer, more complex feature sets naturally require more splits to be uniquely identified, gracefully pushing them to greater depths in the trie. Our contributions are summarized as:
\begin{itemize}[leftmargin=*,topsep=0pt]
    \item We propose a new perspective to study the term ID construction problem through decoding trie optimization, identifying two favorable structural properties that directly lead to better generation efficiency and accuracy.
    \item We provide comprehensive theoretical guarantees and empirical analyses to validate our proposed properties, demonstrating their generalizability by showing that they also enhance existing term ID methods.
    \item Following these properties, we design BONSAI, a simple yet effective method to construct term IDs and their corresponding trie. 
    Extensive experiments show that our method substantially surpasses state-of-the-art baselines, achieving up to a 21.6\% relative improvement.
\end{itemize}

\section{Related Works}\label{sec:related works}

\noindent\textbf{LLM-based Generative Recommendation.} GR casts recommendation as a language generation task. 
Early frameworks like P5~\citep{P5} and InstructRec~\citep{InstructRec} showed that instruction-tuned language models can be a suitable interface across diverse recommendation objectives.
Motivated by the architectural and scaling properties of pretrained LLMs, recent approaches increasingly trend towards utilizing LLMs as the core recommendation backbone~\citep{GenRec,LC-Rec}, and representing historical user interactions as input prompts and autoregressively decoding textual or semantic identifiers of the next item~\citep{TIGER,minionerec,TID}, unlike classical dot-product scoring between user and item latent  vectors~\citep{covington2016deep,yi2019sampling,yang2020mixed}.
Recent studies further investigate whether LLM reasoning improves preference modeling~\citep{onerec-think,Gream,why-think-hurts}, while others extend beyond single-turn predictions towards LLM-driven recommender agents that support dialogue, tool use, and goal-oriented planning to address more complex real-world recommendation scenarios~\citep{agent4rec,agentcf,iagent,memrec}.


\noindent \textbf{Item Representations in GR.} LLM-based GR methods typically represent items in two ways. One line of work uses semantic IDs (SIDs)~\citep{ju2025generative,openonerec,onerec-think,plum}, which discretize item description embeddings into numerical codes~\citep{RQ22} and expand the LLM vocabulary to include these new tokens. However, SID-based methods can incur information loss from discretization~\citep{liu2025understanding}, require additional alignment to connect the new tokens with the LLM's pretrained knowledge space~\citep{minionerec}, and still face limitations in generalization and transferability~\citep{liu2025bridge,hu2026ids}. Another line of work instead uses textual item representations (or term IDs)~\citep{TALLRec,LlamaRec,LLaRA}, reusing natural language tokens already in the LLM vocabulary. This avoids the overhead of introducing new item tokens and better preserves item semantics, enabling stronger integration of domain-specific recommendation signals with the LLM's open-world knowledge~\citep{TID} and yielding better empirical performance and scaling behavior~\citep{liu2025understanding}.


\noindent\textbf{Constrained Decoding and Search Tries.} Unlike open-ended text generation, GR must produce valid items from a fixed candidate set. Accordingly, LLM-based GR methods commonly use constrained beam search at inference time~\citep{vectorizing_beam_search,ReRe}, where decoding is restricted by a prefix tree (trie) built over the semantic or term IDs of candidate items~\citep{ju2025generative}. Although this trie defines the search space and directly affects decoding behavior (and hence accuracy), its structure has received little attention. Prior work mainly focuses on improving the semantic quality of term IDs, for example, through better metadata or keyword selection~\citep{TID,agentictagger}, while largely treating the induced trie as incidental. We argue this is an important omission: even semantically informative term IDs can be poorly suited to beam search if they induce unfavorable trie structures, such as excessive early branching, that hinder decoding and reduce recommendation quality. Our work addresses this gap by studying term IDs not only as semantic representations, but also as design choices that shape the underlying search process in GR.

\section{Preliminary Studies}\label{sec:preliminary}
In this section, we present empirical findings that motivate the structural design principles of decoding tries for GR. By analyzing both the distribution of item metadata and the behavior of constrained tree search, we identify two key requirements for an effective decoding structure.




\vspace{1.5mm}
\noindent \textbf{Adaptive, Variable-Length Term IDs.} Our first study examines the semantic variability of item representations. As shown in \cref{fig:word_dis}, item metadata in the Amazon Review dataset~\citep{SASRec} varies substantially in length, ranging from fewer than 10 words to hundreds. This variation appears both within a single category and across domains.  This suggests strong variation in the semantic richness across items and their effective representational needs.

\begin{figure}[t]
\begin{center}
\includegraphics[width=0.5\textwidth]{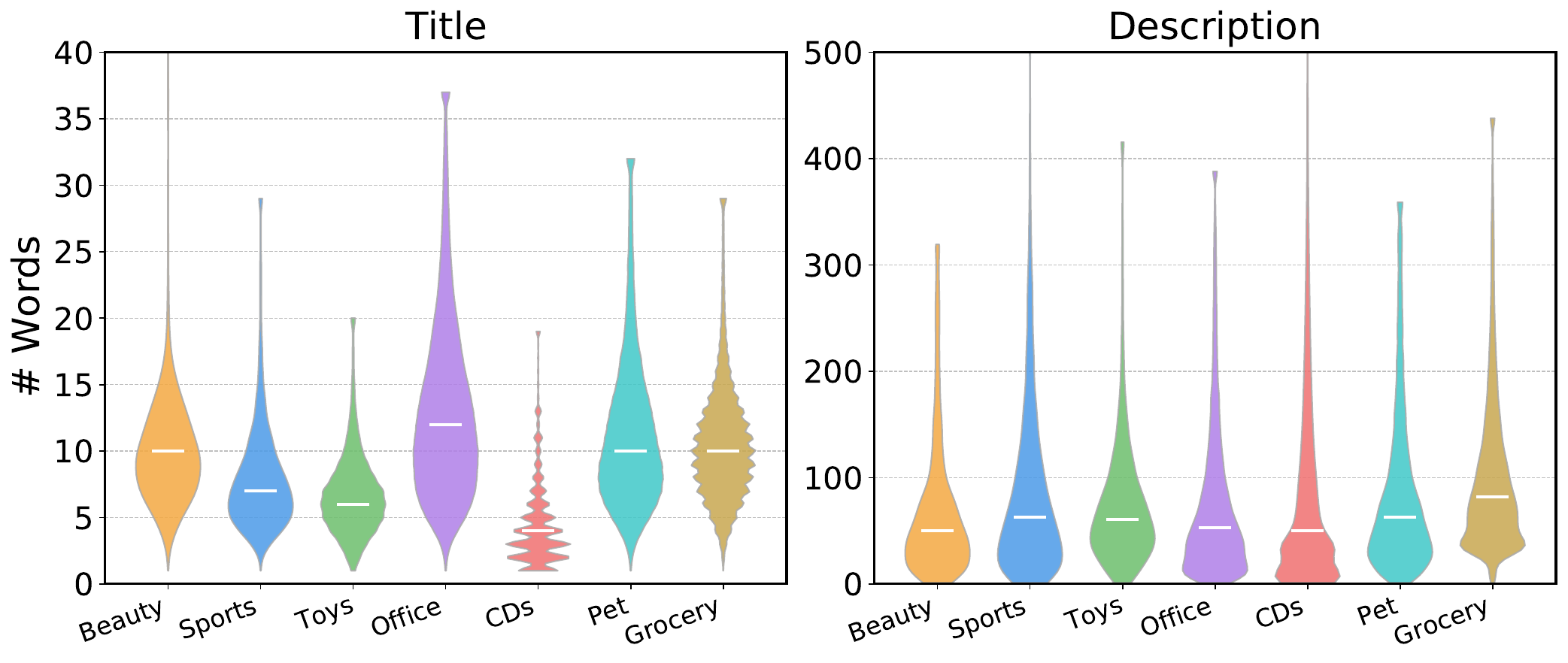}
\end{center}
\vspace{-1.5em}
\caption{The length distribution of item metadata in Amazon Review Datasets~\citep{SASRec}. Semantic richness of different items varies largely both within and across datasets.}
\vspace{-0.7em}
\label{fig:word_dis}
\end{figure}

\begin{figure}[t]
\begin{center}
\vspace{-0.5em}
\includegraphics[width=0.5\textwidth]{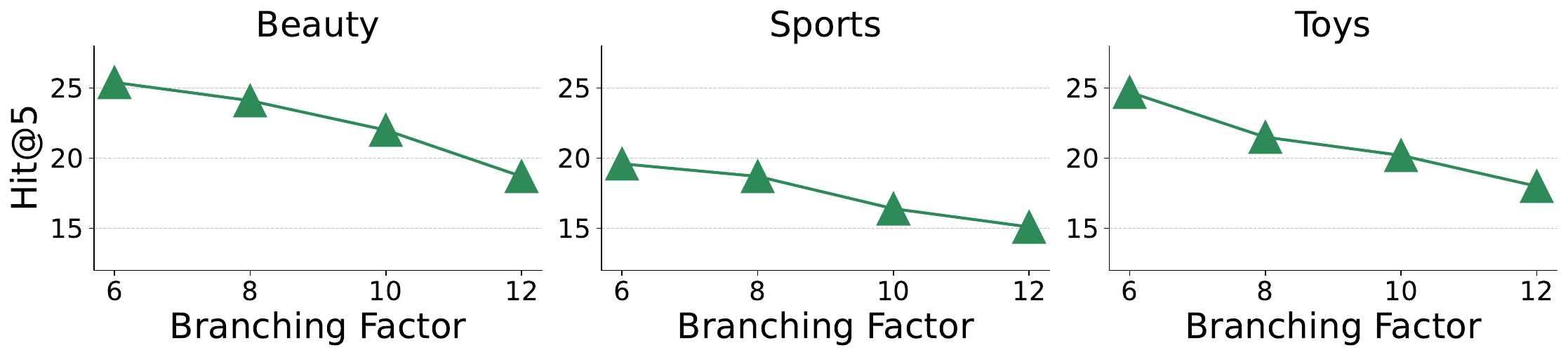}
\end{center}
\vspace{-1.5em}
\caption{Preliminary experiments on predicting the category of the next item under different search tree structures. As the branching factor increases, performance gradually declines.
}
\vspace{-1.5em}
\label{fig:pre_on_branch}
\end{figure}

To reveal the cost of ignoring this variability, we first provide a theoretical analysis~(\cref{thm:variable-depth}) to show that uniform ID length will result in suboptimal performance.
Furthermore, we performed an ablation study later in which we truncated a variable-depth trie built with term IDs to produce shallower, fixed-depth structures~(more detailed settings can be found in \cref{subsec:trie_ablation}). As shown in \cref{fig:abla_depth}, enforcing a uniform depth consistently reduces recommendation performance. Fixed-depth structures compress semantically rich items too aggressively, leading to substantial information loss, while also imposing unnecessary structural constraints on simpler items. 

\noindent \textbf{Takeaway}: \emph{The length of an item's term ID should reflect its semantic richness. To preserve representational fidelity, GR systems should replace uniform, fixed-depth structures with variable-length identifiers that adapt to the underlying metadata.}



\vspace{1.5mm}
\noindent \textbf{Decoding Tries with Constrained Branching Factor.} Our second study isolates the effect of tree width on decoding performance. We begin with a simple task that has a natural hierarchical structure: predicting the exact category of the next item. By merging or splitting subcategories, we construct category trees with different branching factors. As shown in \cref{fig:pre_on_branch}, Hit@5 declines steadily as the branching factor increases from 6 to 12.

We observe the same pattern in the controlled trie experiments in \cref{fig:abla_branch}. When we artificially increase the branching factor using fixed-size cover sets while keeping the depth unchanged, recommendation accuracy decreases consistently across all three datasets. A large branching factor forces the GR model to choose among too many candidates at early decision stages, reducing the probability of following the correct path. Once the search deviates early, the correct item can be excluded from all remaining beams.

\noindent \textbf{Takeaway}: \emph{The success of constrained beam search depends critically on keeping the branching factors small, especially at shallow decision nodes. Large branching factors make early decisions harder and directly reduce search hit rates. Controlling the branching factor is therefore essential for maintaining reliable decoding trajectories and avoiding cascading errors.}


\begin{figure}[t]
\begin{center}
\includegraphics[width=0.46\textwidth]{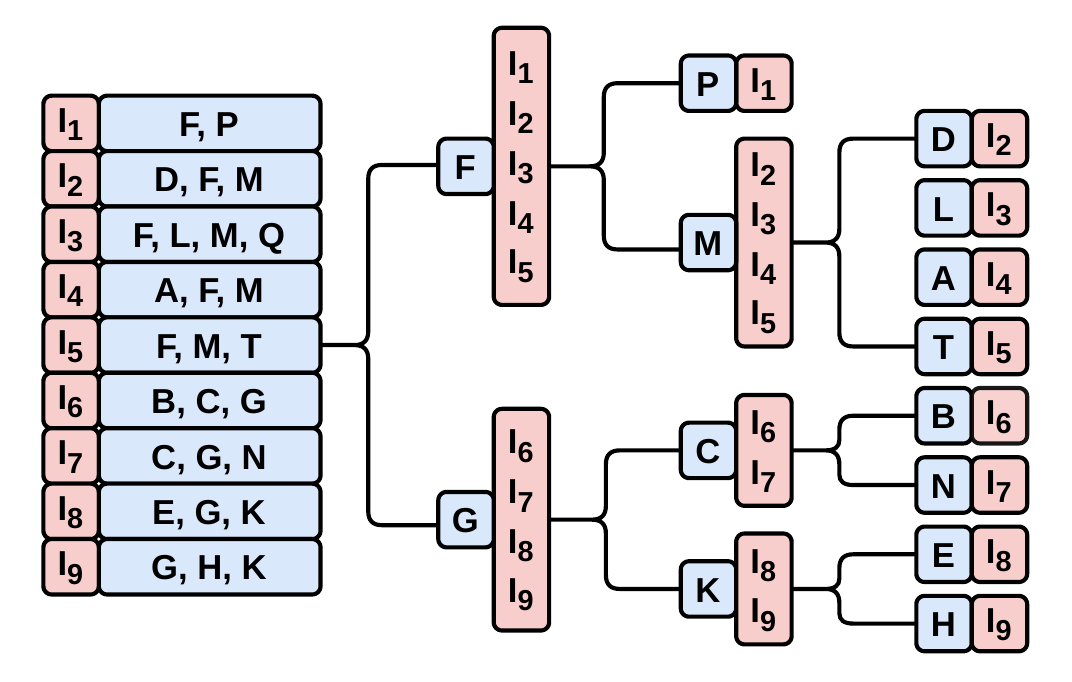}
\end{center}
\vspace{-0.5em}
\caption{Illustration of trie construction process. At each step, we split the {\color{TrieFigPink} item sets} into smaller subsets according to selected {\color{TrieFigBlue}shared features}. Accordingly, each feature serves as the identifier of a node, which contains a subset of items. And each item's ID is equivalent to the sequence of features along the path from the root to its leaf.
}
\label{fig:trie construction}
\vspace{-1.5em}
\end{figure}

\section{Theoretical Analysis}\label{sec:theory}

Here, we provide theoretical analyses to better ground our proposed structural properties for decoding tries in GR. 
We first define the trie construction problem to be solved and notations used in the discussion.

\noindent \textbf{\textit{Problem Setting}}:
Let $I=\{1,\dots,n\}$ be the item set, and let $Y\in I$ denote the correct target items. 
Each item has a set of features, which can be shared or unique.
As shown in \cref{fig:trie construction}, a trie is constructed recursively by partitioning items into subsets with shared features. 
For any nonempty subset $U\subseteq I$, let $\Pi(U)$ denote the set of all feasible feature-induced splits of $U$. 
Each split $\mathcal P \coloneqq \{C_1,\dots,C_k\}\in \Pi(U)$ is a partition of $U$, where $k=|\mathcal P|$ is the number of children. 
Suppose the constrained beam search has beam number $B\ge 1$, and the one-step beam survival factor (the probability that the paths towards correct items are still included by the beam search at a given step) is $\phi_B(\cdot)$. 
For any feasible trie $T$, let $d(i)$ be the depth of leaf $i$, and let $k_t(i)$ be the branching factor of item $i$ at depth $t$. 
The beam search correct rate over $T$ is:
\[
R(T) \coloneqq \sum_{i\in I} p_i \prod_{t=1}^{d(i)} \phi_B\!\bigl(k_t(i)\bigr).
\]
where $p_i$ is the prior probability that item $i$ is the target. 
Our objective is to \textbf{find a recursive partitioning of $I$ that builds a trie $T$ achieving the optimal beam search success rate}, $V(I)\coloneqq \max R(T)$.


\noindent \textbf{\textit{Theoretical Results}}:
Through theoretical analysis, we provide formal guarantees for our adaptive term decoding trie, explicitly validating our two core properties: constrained branching factors and adaptive term ID length.
We begin with the idealized setting in which the prior probability $p_i$ of each item is known. 
In this scenario, we prove that a theoretically optimal solution maximizing the beam search success rate can always be obtained via Bellman dynamic programming~\citep{bertsekas2012dynamic}, as stated in \cref{thm:optimal-tree} below.

\begin{figure*}[t]
\begin{center}
\includegraphics[width=\textwidth]{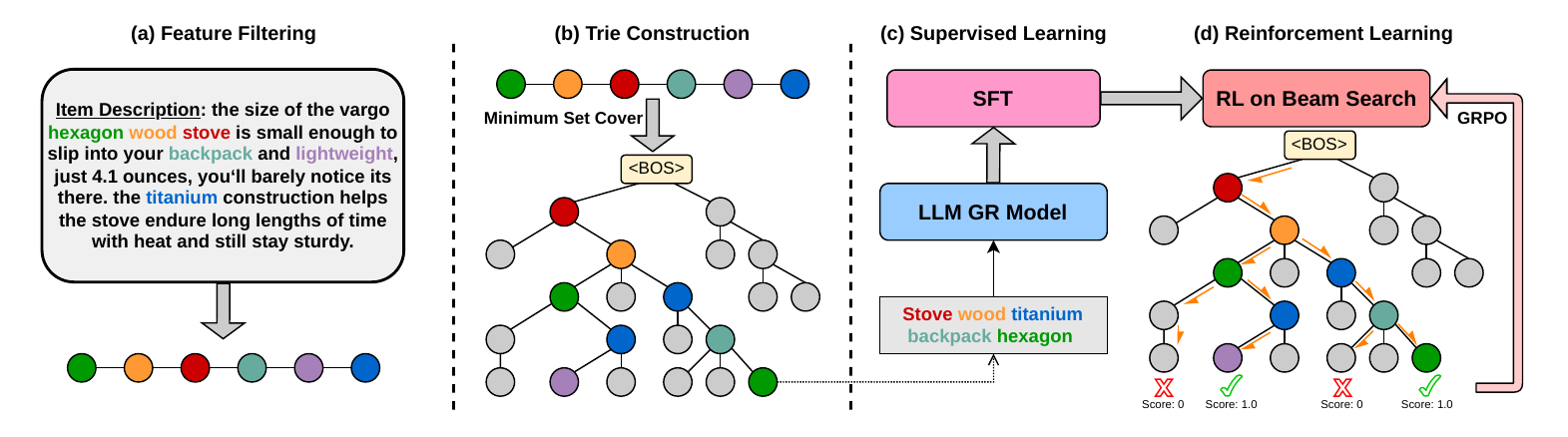}
\end{center}
\vspace{-1.5em}
\caption{Overview of BONSAI and its four main components: (a) feature filtering, which extracts recommendation-relevant terms; (b) trie construction, which builds a term-based trie using a minimum set cover strategy; (c) supervised learning, which trains the LLM backbone to acquire recommendation capability; and (d) reinforcement learning, which optimizes the model under real decoding constraints while aligning items with multiple possible IDs.}
\label{fig:framework}
\vspace{-1em}
\end{figure*}

\begin{theorem}[Optimal tree construction]
\label{thm:optimal-tree}
For any nonempty subset $U\subseteq I$, $V(U)$ denotes the maximum achievable beam search success rate (supposing target items $Y\in U$). 
For any feasible split $\mathcal P\in\Pi(U)$, $\operatorname{Val}(U,\mathcal P)$ denotes the success rate obtained by first applying split $\mathcal P$ at $U$ and then continuing optimally in each resulting child subset. 
For every non-singleton subset $U\subseteq I$,
\[
V(U)
=
\max_{\mathcal P\in \Pi(U)}
\operatorname{Val}(U,\mathcal P).
\]
Choosing at each subset $U$ a maximizing split in $\Pi(U)$ with Bellman dynamic programming yields a globally optimal tree, and the optimal correct rate on the full item set is $V(I)$.
\end{theorem}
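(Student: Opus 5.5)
The plan is to make the recursive structure of $R(T)$ explicit and then run a standard principle-of-optimality argument by strong induction on $|U|$. For a subset $U$, define the restricted objective of a subtree $T_U$ rooted at $U$ by
\[
R_U(T_U)\coloneqq \sum_{i\in U} p_i \prod_{t=1}^{d_U(i)} \phi_B\bigl(k_t(i)\bigr),
\]
where depths and branching factors are measured inside $T_U$. Set $V(U)\coloneqq\max_{T_U} R_U(T_U)$, and take the base case $V(\{i\})=p_i$ (empty product, since a singleton is a leaf). The maximum is over a finite set, so it is attained, provided every non-singleton $U$ admits at least one feasible split with every part strictly smaller than $U$. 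I will state this as the implicit feasibility assumption: splits are proper partitions, so the recursion terminates and every trie is finite.

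\textbf{Step 1 (decomposition).} Fix a split $\mathcal P=\{C_1,\dots,C_k\}\in\Pi(U)$ at the root of $T_U$. Every item $i\in C_j$ then has first-step branching factor $k_1(i)=k$. Its remaining path, of depth $d_U(i)-1$, lies in the subtree $T_{C_j}$, with the same branching factors. Factoring out $\phi_B(k)$ gives
\[
R_U(T_U)=\phi_B(k)\sum_{j=1}^{k} R_{C_j}(T_{C_j}).
\]
This additive–multiplicative separability is the crux. The contribution of each child subtree depends only on that subtree, and $\phi_B(k)\ge 0$ because it is a probability.

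\textbf{Step 2 (optimal value of a split).} Define $\operatorname{Val}(U,\mathcal P)\coloneqq\phi_B(|\mathcal P|)\sum_{C\in\mathcal P}V(C)$. The subtrees $T_{C_j}$ can be chosen independently, since the children are disjoint and feasibility of a split depends only on its node set. Nonnegativity of $\phi_B$ then makes maximizing the product equivalent to maximizing each summand separately. So the best trie beginning with $\mathcal P$ achieves exactly $\operatorname{Val}(U,\mathcal P)$.

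\textbf{Step 3 (Bellman equation).} Every trie on a non-singleton $U$ begins with some $\mathcal P\in\Pi(U)$. By Step 1 and the definition of $V(C_j)$, we get $R_U(T_U)\le\operatorname{Val}(U,\mathcal P)\le\max_{\mathcal P}\operatorname{Val}(U,\mathcal P)$. Conversely, Step 2 constructs a trie attaining $\operatorname{Val}(U,\mathcal P^\star)$ for a maximizing $\mathcal P^\star$. Together these give $V(U)=\max_{\mathcal P\in\Pi(U)}\operatorname{Val}(U,\mathcal P)$.

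\textbf{Step 4 (greedy-DP tree is optimal).} Let $T^\star$ be the trie that, at each subset $U$, applies a maximizing split. I will show $R_U(T^\star_U)=V(U)$ by strong induction on $|U|$. The base case holds for singletons. For the inductive step, the children are strictly smaller, so by hypothesis each child subtree attains $V(C)$. Step 1 then gives $R_U(T^\star_U)=\operatorname{Val}(U,\mathcal P^\star)=V(U)$. Taking $U=I$ yields $R(T^\star)=V(I)$. The paper's correct rate with $Y\in U$ matches $R_U$ up to normalization by $\sum_{i\in U}p_i$, which is constant in $T$ and therefore does not affect any argmax.

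The main obstacle I expect is the independence claim in Step 2. It requires that $\Pi(C)$ depend only on $C$ and not on the ancestors or on features already used along the path. If feature reuse were forbidden along a path, I would enlarge the DP state to (subset, set of used features). The same induction then goes through verbatim on this larger but still finite state space.
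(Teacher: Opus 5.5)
Your proposal is correct and follows essentially the same route as the paper's proof: strong induction on $|U|$, an upper bound obtained by decomposing any feasible tree at its root split, and a matching lower bound obtained by placing optimal child subtrees under a maximizing split. The differences are cosmetic. You work with the unnormalized objective (so $V(\{i\})=p_i$ and there is no division by $p(U)$, which also avoids trouble with zero-probability subsets), and you state explicitly the locality and properness assumptions on $\Pi(\cdot)$ that the paper's argument uses without stating them.
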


However, in real-world recommendation scenarios, the true prior $p_i$ is unavailable. 
To adhere to our design principle of constraining the branching factor at early decision nodes, we formulate the node-splitting process as a minimum set cover problem.
\cref{thm:local-approx} demonstrates that our proposed greedy minimum cover set method provides a strong approximation to the theoretical optimum. 
Specifically, it approximates the optimal solution at each split with a high approximation factor, mathematically ensuring that the fan-out remains optimally low.

\begin{theorem}[Local approximation of greedy set cover]
\label{thm:local-approx}
Under the regularity conditions stated in the appendix, for every non-singleton subset $U\subseteq I$, let
\[
\mathcal P^*(U)\in \arg\max_{\mathcal P\in\Pi(U)} \operatorname{Val}(U,\mathcal P)
\]
be a locally optimal split. The greedy set-cover split $\mathcal P^g(U)$ satisfies
\[
\operatorname{Val}(U,\mathcal P^g(U))
\ge
\frac{1}{\rho \alpha_U}
\operatorname{Val}(U,\mathcal P^*(U)),
\]
where $\rho\ge 1$ is a problem-dependent constant, and $\alpha_U$ indicates how close the greedy algorithm is to an exact solution. 
Thus, the greedy set cover is a provable local approximation when the true prior $\{p_i\}$ is unknown.
\end{theorem}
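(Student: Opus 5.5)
We plan to prove the bound by sandwiching both values between expressions that depend only on the fan-out $k=|\mathcal P|$ of the split, and then using the set-cover guarantee to compare fan-outs. Using \cref{thm:optimal-tree}, we first write
\[
\operatorname{Val}(U,\mathcal P)=\phi_B(|\mathcal P|)\sum_{C\in\mathcal P} V(C),
\]
where $V(C)$ is the optimal continuation value on child $C$ (with priors restricted to $C$). This holds because every item in $U$ pays the same first-step factor $\phi_B(k)$, and after that the subtrees are optimized independently. The comparison then splits into two parts. The first is a \emph{branching term} $\phi_B(|\mathcal P^g|)$ versus $\phi_B(|\mathcal P^*|)$. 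The second is a \emph{continuation term} $\sum_{C\in\mathcal P^g}V(C)$ versus $\sum_{C\in\mathcal P^*}V(C)$.

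We would take the following regularity conditions from the appendix, and state them explicitly.

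First, $\phi_B$ is nonincreasing in $k$ and satisfies a scaling bound $\phi_B(\alpha k)\ge \phi_B(k)/\alpha$ for $\alpha\ge1$. For example, this holds if $\phi_B(k)=\min(1,B/k)$, or more generally if $k\phi_B(k)$ is nondecreasing.

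Second, continuation values are comparable across feasible splits up to a factor $\rho$:
\[
\sum_{C\in\mathcal P}V(C)\le \rho\sum_{C\in\mathcal P'}V(C)
\]
for all $\mathcal P,\mathcal P'\in\Pi(U)$. This is where the problem-dependent $\rho$ enters. It can be justified by the fact that any child can be refined further by feature splits, so the coarser or finer choice at one level costs at most a bounded factor.

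Third, the greedy cover returns a split with $|\mathcal P^g(U)|\le \alpha_U\,k_{\min}(U)$, where $k_{\min}(U)$ is the minimum cover size. When features are given as sets covering $U$, the classical greedy analysis gives $\alpha_U\le H(\max_f |S_f\cap U|)\le 1+\ln|U|$. We would cite this bound and, to get a genuine partition, assign each item to the first selected set covering it.

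The chain of inequalities then runs as follows. Since $|\mathcal P^*|\ge k_{\min}$ and $\phi_B$ is nonincreasing,
\[
\phi_B(|\mathcal P^g|)\ge\phi_B(\alpha_U k_{\min})\ge \tfrac{1}{\alpha_U}\phi_B(k_{\min})\ge\tfrac{1}{\alpha_U}\phi_B(|\mathcal P^*|).
\]
Combining this with the second condition, $\sum_{C\in\mathcal P^g}V(C)\ge\rho^{-1}\sum_{C\in\mathcal P^*}V(C)$. Multiplying the two bounds gives
\[
\operatorname{Val}(U,\mathcal P^g)\ge \frac{1}{\rho\alpha_U}\operatorname{Val}(U,\mathcal P^*).
\]

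The main obstacle is the continuation term. Without assumptions, a small-fan-out greedy split could produce children whose optimal subtrees are much worse, for instance by being much deeper. The first thing we would try is to bound each $V(C)$ between $\phi_B(k_{\max})^{D}\,p(C)$ and $p(C)$, where $D$ is a depth bound and $p(C)=\sum_{i\in C}p_i$. Since $\sum_C p(C)=p(U)$ for any partition, this gives $\rho=\phi_B(k_{\max})^{-D}$. That value is crude but explicit, and it would be what we state as the regularity condition if nothing sharper is available.
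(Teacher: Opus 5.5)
Your proof is correct and follows the paper's skeleton. Write $\operatorname{Val}(U,\mathcal P)$ as the first-step factor $\phi_B(|\mathcal P|)$ times a continuation term, compare continuation terms through $\rho$, and compare first-step factors through $\alpha_U$. Your second condition is exactly the paper's bounded-heterogeneity assumption: your unnormalized $V(C)$ is $p(C)$ times the paper's conditional $V(C)$, so your sums equal $p(U)\,W(U,\mathcal P)$ and the factor $p(U)$ cancels. Your third condition is the paper's definition of $\alpha_U$.

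The real difference is the branching term. The paper assumes a beam-limited regime: every feasible split of every non-singleton $U$ has at least $B$ parts. Then $\phi_B(k)=B/k$ exactly, and $\phi_B(|\mathcal P^g|)/\phi_B(|\mathcal P^*|)=|\mathcal P^*|/|\mathcal P^g|\ge 1/\alpha_U$ in one line. You use only that $\phi_B$ is nonincreasing and $k\,\phi_B(k)$ is nondecreasing. Since $\min(1,B/k)$ satisfies both for every $k$, your chain works in both the saturated and the beam-limited regimes.

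This is a real gain. The paper's assumption cannot hold for any $U$ with $2\le|U|<B$, because a partition of $U$ has at most $|U|$ parts. So for any realistic beam width it fails on all small subsets, while your version needs no such restriction. The paper's route is a line shorter; yours proves the theorem with one fewer hypothesis.

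Your extras also go beyond the paper, which leaves $\alpha_U$ and $\rho$ abstract. The $H(\max_f|S_f\cap U|)$ bound on $\alpha_U$ is the standard greedy guarantee. Your crude $\rho$ can be instantiated with $D=1$ and $k_{\max}=|I|$. Every item carries a unique feature, so splitting any child $C$ into singletons is feasible. That gives $V(C)\ge\phi_B(|C|)\,p(C)$, hence $\rho\le\max(1,|I|/B)$.

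One small point: the step $|\mathcal P^*|\ge k_{\min}(U)$ needs $\Pi(U)$ to consist of cover-induced partitions, so that the minimum cover size equals $\min_{\mathcal P\in\Pi(U)}|\mathcal P|$. The paper sidesteps this by defining $\alpha_U$ directly against $\min_{\mathcal P\in\Pi(U)}|\mathcal P|$.
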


Furthermore, to theoretically validate our property that an item's term ID length should be inherently adaptive to its semantic richness, we prove in \cref{thm:variable-depth} that both the optimal theoretical construction and our greedy empirical method naturally admit variable leaf depths~(i.e., term ID length). 
Conversely, forcing a fixed-depth trie structure is suboptimal and inevitably degrades the theoretical beam search success rate.

\begin{theorem}[Variable depth is preferable to fixed depth]
\label{thm:variable-depth}
Let $\mathcal T$ be the set of all feasible tries, and let $\mathcal T_{\mathrm{eq}}\subseteq \mathcal T$ be the subset of feasible trees whose leaves have the same depth. Then
\[
\max_{T\in\mathcal T_{\mathrm{eq}}}R(T)\le \max_{T\in\mathcal T}R(T).
\]
Moreover, under the nontrivial-extension condition stated in the appendix, if a feasible tree $T$ has leaves of different depths, then any equal-depth tree obtained by extending the shallower leaves of $T$ satisfies 
\[
R(\widetilde T)<R(T).
\]
Therefore, both the optimal construction and the greedy approximation naturally admit variable leaf depths, while forcing fixed depth will be suboptimal.
\end{theorem}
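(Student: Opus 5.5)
The first inequality follows from set inclusion. Since $\mathcal T_{\mathrm{eq}}\subseteq\mathcal T$, the maximum of $R$ over the smaller family cannot exceed its maximum over the larger family. Both maxima exist because, for a finite item set $I$, the feasible recursive partitions form a finite set; equivalently, one can cite \cref{thm:optimal-tree}, which identifies $\max_{T\in\mathcal T}R(T)=V(I)$. The substantive part is the strict inequality for the extension of a variable-depth tree.

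For that part I will work with the product form of
\[
R(T)=\sum_i p_i\prod_{t=1}^{d(i)}\phi_B\bigl(k_t(i)\bigr).
\]
I will use the natural regularity assumptions that $\phi_B$ takes values in $[0,1]$ and that $\phi_B(1)=1$: a node with a single child incurs no loss, since the beam trivially keeps it.

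Let $T$ have leaves of unequal depth, with maximal depth $D$. Suppose $\widetilde T$ is obtained from $T$ by extending every leaf $i$ with $d(i)<D$ down to depth $D$. This means $D-d(i)$ additional decoding steps are appended along $i$'s path, while the prefix structure seen by every item is left unchanged. I need to verify that the branching factors $k_t(i)$ for $t\le d(i)$ are the same in $T$ and $\widetilde T$, so that every item's first $d(i)$ factors coincide. Then, term by term,
\[
p_i\prod_{t=1}^{D}\widetilde\phi_t(i)=p_i\prod_{t=1}^{d(i)}\phi_B\bigl(k_t(i)\bigr)\cdot\prod_{t=d(i)+1}^{D}\phi_B\bigl(\widetilde k_t(i)\bigr)\le p_i\prod_{t=1}^{d(i)}\phi_B\bigl(k_t(i)\bigr),
\]
because every factor lies in $[0,1]$. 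Summing over $i$ gives $R(\widetilde T)\le R(T)$.

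To get strictness I will use the nontrivial-extension condition, which I expect to say that at least one extended leaf $i_0$ has $p_{i_0}>0$ and passes through some appended level whose survival factor satisfies $\phi_B(\widetilde k_t(i_0))<1$. Examples are an appended node with $\widetilde k_t\ge 2$ (the extension splits on genuine features), or a padding token whose decoding is imperfect. For that item the inequality above is strict. Since all other terms satisfy a weak inequality, $R(\widetilde T)<R(T)$ follows. Combining this with the first part: if any optimal tree $T^*$ has unequal depths, every equal-depth tree obtained by extending it is strictly worse. Because the greedy set cover terminates each branch as soon as its subset becomes a singleton, it produces variable depths whenever the item feature sets differ in richness, which is the concluding remark of the theorem.

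I expect the main obstacle to be the bookkeeping claim that extension leaves the shallow branching factors unchanged. Extending a leaf $i$ by inserting new nodes below it must not alter the number of children at $i$'s ancestors. I will make this precise by defining extension as replacing leaf $i$ with a chain (or subtree) rooted at the same position. The number of children of every existing internal node is then preserved, and only new factors indexed by $t>d(i)$ are introduced. If the appendix's notion of extension instead allows re-partitioning, I would fall back on restricting the claim to such chain extensions, which is exactly where the nontrivial-extension hypothesis enters.
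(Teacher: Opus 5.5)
Your proposal is correct and follows the paper's proof. The first inequality comes from set inclusion. For the strict part, each item is compared individually: items already at maximal depth keep their product unchanged, and each extended shallow item picks up an appended factor $\phi_B(k)=B/k\le\eta<1$, so summing gives $R(\widetilde T)<R(T)$. The paper's nontrivial-extension assumption is exactly that some added level has $k>B$, which matches your ``survival factor $<1$'' guess.

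The two points you flag are used only implicitly in the paper, so stating them is a genuine improvement: the extension must preserve the prefix branching factors, and some extended item must have $p_{i_0}>0$. One small correction to your examples: with $\phi_B(k)=\min(1,B/k)$, an appended split with $\widetilde k_t\ge 2$ incurs a loss only when $\widetilde k_t>B$.
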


\noindent
Proofs of the above theorems can be found in \cref{app:proof}.

\section{\underline{B}ranching-\underline{O}ptimized \underline{N}ode \underline{S}tructure for \underline{A}daptive \underline{I}dentifiers (\methodemoji)}\label{sec:method}

In this section, we introduce our methodological contributions. 
As shown in \cref{fig:framework},
our framework consists of four stages: (1) feature filtering: extracting informative keywords from item metadata;  (2) trie construction: building the decoding trie with our proposed properties;
(3) supervised learning: aligning an LLM with a sequential recommendation task;
(4) reinforcement learning: further optimizing the model under real decoding scenarios.

\subsection{Feature Filtering}

To construct a decoding trie that supports adaptive and variable-length term IDs, we first extract terms from raw item metadata to serve as the basic elements in the trie. Specifically, we prompt an LLM to read each item description and remove uninformative words. For example, as shown in \cref{fig:framework}.a, given a dense product description, the LLM identifies salient features such as ``stove,'' ``wood,'' ``titanium,'' ``backpack,'' and ``hexagon,'' while filtering out generic filler terms (a sample prompt is provided in \cref{app:prompt}).

To preserve the natural semantic variability across items, we place no constraint on the number of extracted terms. As a result, each item is represented after filtering as a variable-size set of features. We further restrict the representation to terms that explicitly appear in the original metadata. Unlike prior methods~\citep{TID,agentictagger}, which use LLM annotators to expand or enrich term semantics, our approach relies purely on extraction. This design choice helps isolate the source of BONSAI's gains: they arise primarily from a better-structured decoding trie, rather than from semantically enhanced term IDs. This also points to future opportunities to achieve further gains from enrichment with joint methods.

\subsection{Trie Construction}

After feature filtering, each item is represented as a variable-size set of features, or terms. The next step is to build a decoding trie over these features, where internal nodes correspond to shared features and leaf nodes correspond to items. This trie defines the valid output space of the GR model at inference time and constrains the beam search process, as illustrated in \cref{fig:trie example} and \cref{fig:framework}.b.

BONSAI constructs the trie by recursively partitioning the item set into smaller subsets according to shared features. As shown in \cref{fig:trie construction} and \cref{alg:recursive-set-cover}, we formulate the node-splitting step as a recursive set cover problem. For a given subset of items, the goal is to find the smallest set of features that fully partitions the subset (Line 4 of \cref{alg:recursive-set-cover}). In principle, this step can be solved exactly using a minimum set cover solver such as CP-SAT~\citep{CP-SAT}. In practice, however, the scale of recommendation datasets makes exact optimization prohibitively expensive. We therefore use an efficient greedy approximation (\cref{alg:greedy-cover-split}). At each iteration, the greedy procedure selects the shared feature that covers the largest number of uncovered items (Line 4 of \cref{alg:greedy-cover-split}) and updates the remaining item pool accordingly (Line 9), repeating this process until all items in the subset are covered (Lines 3--9). Once the cover set is obtained, BONSAI adds the corresponding feature groups as child nodes and recursively builds a subtree for each group (Lines 9--11 of \cref{alg:recursive-set-cover}). After the trie is complete, an item's term ID is defined as the sequence of keywords along the path from the root to its leaf node. Importantly, because an item may share different valid combinations of features with different item groups, the construction naturally permits multiple valid root-to-leaf paths for the same item.

\begin{algorithm}[t]
\caption{Trie Construction}
\label{alg:recursive-set-cover}
\begin{algorithmic}[1]
\Require Item set $U \subseteq I$. For each subset $U$, a candidate shared-feature family $\mathcal{F}(U)$; every item also has its own unique feature
\Ensure A variable-depth trie $T(U)$

\If{$|U| = 1$}
    \State Create a leaf for the unique item in $U$
    \State \Return $T(U)$
\EndIf

\State Solve a minimum set cover
\[
\mathcal{P}(U)=\{C_1,\dots,C_k\}
\]
by a given solver~(e.g., Algorithm~\ref{alg:greedy-cover-split})

\If{$\mathcal{P}(U)=\emptyset$}
    \State Create leaves for all items in $U$ using their unique features
    \State \Return $T(U)$
\EndIf

\State Create an internal node associated with the current subset $U$

\For{each child subset $C_j \in \mathcal{P}(U)$}
    \State Recursively construct subtree $T(C_j)$
    \State Attach $T(C_j)$ as a child of the current node
\EndFor

\If{$U \setminus \bigcup_{j=1}^k C_j \neq \emptyset$}
    \State Attach each remaining uncovered item as a leaf using its unique feature
\EndIf

\State \Return $T(U)$
\end{algorithmic}
\end{algorithm}

\begin{algorithm}[t]
\caption{Greedy Minimum Cover Split at a Node}
\label{alg:greedy-cover-split}
\begin{algorithmic}[1]
\Require Current item subset $U \subseteq I$. Candidate shared-feature family $\mathcal{F}(U)=\{F_1,\dots,F_m\}$ restricted to $U$
\Ensure A greedy feature cover $\mathcal{P}^g(U)$ of $U$

\State $\mathcal{U}_{\mathrm{rem}} \gets U$ \Comment{items not yet covered}
\State $\mathcal{P}^g(U) \gets \emptyset$

\While{$\mathcal{U}_{\mathrm{rem}} \neq \emptyset$}
    \State Choose
    $
    F^* \in \arg\max_{F \in \mathcal{F}(U)}
    \left|F \cap \mathcal{U}_{\mathrm{rem}}\right|
    $
    \State $C^* \gets F^* \cap \mathcal{U}_{\mathrm{rem}}$
    \If{$C^* = \emptyset$}
        \State \textbf{break}
    \EndIf
    \State $\mathcal{P}^g(U) \gets \mathcal{P}^g(U) \cup \{C^*\}$
    \State $\mathcal{U}_{\mathrm{rem}} \gets \mathcal{U}_{\mathrm{rem}} \setminus C^*$
\EndWhile

\State \Return $\mathcal{P}^g(U)$
\end{algorithmic}

\end{algorithm}

This construction operationalizes the two structural principles introduced in \cref{sec:preliminary}. First, by computing a minimum cover set at each node, BONSAI keeps the branching factor as low as possible. Limiting fan-out, particularly near the root of the trie, is crucial for decoding: it reduces early ambiguity and increases the probability that constrained beam search remains on the correct path. Second, the recursive partitioning process naturally yields a trie with adaptive depth. Items with simple feature sets are separated quickly, while items with richer semantics generally require more splits to be uniquely identified and therefore appear deeper in the trie. This adaptive structure avoids the information loss caused by fixed-depth trees. In addition, as we show in \cref{subsec:RL ablation}, semantically rich items benefit from having multiple valid representation paths, which further improves retrieval performance.

\subsection{Supervised Learning}
In the initial training stage, we apply supervised fine-tuning (SFT) to adapt the LLM backbone to the sequential recommendation task. Given a user's interaction history $\mathcal{S}_{u}=(i_{1},i_{2},\ldots,i_{t})$, we serialize the historical items using their trie-aware term IDs and train the model to autoregressively generate the term ID of the next item, $i_{t+1}$. Because a single item may correspond to multiple valid root-to-leaf paths in the trie, we use the longest available representation as the supervision target during SFT. This choice preserves more fine-grained semantic information for items with rich feature sets.  Practically, for each training example, we construct a prompt $x$ that encodes the user's interaction history and the recommendation query (a sample prompt is provided in \cref{app:prompt}). Let $y=(y_{1},\ldots,y_{m})$ denote the token sequence of the target item's term ID. The SFT objective is
$
\mathcal{L}_{\mathrm{SFT}}
=
-\sum_{k=1}^{m}\log P_{\theta}(y_{k}\mid x,y_{<k}),
$
where $\theta$ denotes the parameters of the GR model. After this stage, the model learns an initial mapping from user behavior sequences to next-item term IDs.

\begin{table*}[t]
\centering
\caption{Performance of BONSAI compared to various types of methods.  “*” indicates that our method achieves statistically significant improvements (i.e.,
t-test with $p < 0.05$) over the best baseline performance which is \underline{underlined}.}
\vspace{-1em}
\label{tab:main_results}
\small
\setlength{\tabcolsep}{3.5pt}
\renewcommand{\arraystretch}{1.15}
\begin{tabular}{p{1.5cm}|l|cccc|cccc|cccc}
\toprule
\multicolumn{2}{c|}{\multirow{2}{*}{\textbf{Model}}} &
\multicolumn{4}{c|}{\textbf{Beauty}} &
\multicolumn{4}{c|}{\textbf{Sports}} &
\multicolumn{4}{c}{\textbf{Toys}} \\
\cmidrule(lr){3-6} \cmidrule(lr){7-10} \cmidrule(lr){11-14}
\multicolumn{2}{c|}{} &
\textbf{R@5} & \textbf{R@10} & \textbf{N@5} & \textbf{N@10} &
\textbf{R@5} & \textbf{R@10} & \textbf{N@5} & \textbf{N@10} &
\textbf{R@5} & \textbf{R@10} & \textbf{N@5} & \textbf{N@10} \\
\midrule
\multirow{2}{*}{\makecell[l]{\textbf{Traditional}}}
& SASRec       & 0.0402 & 0.0607 & 0.0254 & 0.0320 & 0.0199 & 0.0301 & 0.0106 & 0.0141 & 0.0448 & 0.0626 & 0.0300 & 0.0358 \\
& GRU4Rec      & 0.0395 & 0.0584 & 0.0265 & 0.0326 & 0.0190 & 0.0312 & 0.0122 & 0.0161 & 0.0330 & 0.0490 & 0.0228 & 0.0279 \\
\midrule
\multirow{2}{*}{\makecell[l]{\textbf{Train-from-}\\\textbf{scratch GR}}}
& TIGER        & 0.0405 & 0.0623 & 0.0267 & 0.0337 & 0.0215 & 0.0347 & 0.0137 & 0.0179 & 0.0337 & 0.0547 & 0.0209 & 0.0276 \\
& HSTU         & 0.0424 & 0.0652 & 0.0280 & 0.0353 & 0.0268 & 0.0343 & 0.0173 & 0.0226 & 0.0366 & 0.0566 & 0.0245 & 0.0309 \\
\midrule
\multirow{2}{*}{\makecell[l]{\textbf{LLM-based}\\\textbf{SID GR}}}
& MiniOneRec   & 0.0441 & 0.0643 & 0.0298 & 0.0330 & 0.0252 & 0.0362 & 0.0164 & 0.0192 & 0.0459 & 0.0645 & 0.0314 & 0.0466 \\
& OneRec-Think & 0.0563 & 0.0791 & 0.0398 & 0.0471 & 0.0288 & 0.0412 & 0.0199 & 0.0239 & 0.0579 & 0.0797 & 0.0412 & 0.0482 \\
\midrule
\multirow{2}{*}{\makecell[l]{\textbf{LLM-based}\\\textbf{textual GR}}}
& IDGenRec     & 0.0484 & 0.0693 & 0.0337 & 0.0404 & 0.0270 & 0.0388 & 0.0185 & 0.0223 & 0.0595 & 0.0800 & 0.0432 & 0.0498 \\
& GRLM       & \underline{0.0582} & \underline{0.0804} & \underline{0.0416} & \underline{0.0483} & \underline{0.0326} & \underline{0.0449} & \underline{0.0225} & \underline{0.0278} & \underline{0.0640} & \underline{0.0866} & \underline{0.0440} & \underline{0.0508}\\
\midrule
\multirow{2}{*}{\makecell[l]{\textbf{Ours}}}
& \cellcolor{lightblueRow}\textbf{BONSAI}
& \cellcolor{lightblueRow}\textbf{0.0691*}
& \cellcolor{lightblueRow}\textbf{0.0960*}
& \cellcolor{lightblueRow}\textbf{0.0493*}
& \cellcolor{lightblueRow}\textbf{0.0577*}
& \cellcolor{lightblueRow}\textbf{0.0379*}
& \cellcolor{lightblueRow}\textbf{0.0524*}
& \cellcolor{lightblueRow}\textbf{0.0261*}
& \cellcolor{lightblueRow}\textbf{0.0324*}
& \cellcolor{lightblueRow}\textbf{0.0776*}
& \cellcolor{lightblueRow}\textbf{0.1053*}
& \cellcolor{lightblueRow}\textbf{0.0528*}
& \cellcolor{lightblueRow}\textbf{0.0614*} \\
& \cellcolor{lightyellowRow}\textbf{Impr. (\%)}
& \cellcolor{lightyellowRow}18.7\%
& \cellcolor{lightyellowRow}19.4\%
& \cellcolor{lightyellowRow}18.5\%
& \cellcolor{lightyellowRow}19.5\%
& \cellcolor{lightyellowRow}16.4\%
& \cellcolor{lightyellowRow}16.8\%
& \cellcolor{lightyellowRow}16.0\%
& \cellcolor{lightyellowRow}16.5\%
& \cellcolor{lightyellowRow}21.3\%
& \cellcolor{lightyellowRow}21.6\%
& \cellcolor{lightyellowRow}20.1\%
& \cellcolor{lightyellowRow}20.8\% \\
\bottomrule
\end{tabular}
\end{table*}

\subsection{Reinforcement Learning}
Building on the SFT checkpoint, we further fine-tune the model with RL so that its generated outputs better align with the final recommendation objective~\citep{ReRe}. This stage improves upon SFT in two ways. First, whereas SFT learns from imitation of ground-truth targets, RL optimizes the model directly in the actual decoding setting by learning from both successful and unsuccessful generations. Second, while SFT supervises the model with only one textual ID sequence per item, RL encourages the model to align with all valid textual ID variants of the same item, as we show later in \cref{subsec:RL ablation}. As a result, constrained beam search can preserve a larger set of correct decoding paths, which improves the overall success rate.

We use Group Relative Policy Optimization (GRPO) for this stage~\citep{deepseekmath,deepseek-R1}. GRPO improves the policy by sampling multiple rollouts for the same input and normalizing their rewards within the sampled group. Concretely, for each prompted user history $x$, we run constrained beam search over the trie to generate a set of candidate textual IDs:
$
\mathcal{Y}(x)=\{y^{(1)},y^{(2)},\ldots,y^{(G)}\},
$
where each candidate $y^{(g)}$ corresponds to a valid root-to-leaf path in the decoding trie. We then evaluate each rollout with a rule-based reward function that assigns a score according to whether the generated path matches the ground-truth next item:
\begin{equation}
R\bigl(y^{(g)}, Y^{*}\bigr)=
\begin{cases}
1, & y^{(g)} \in Y^{*},\\
0, & \text{otherwise},
\end{cases}
\end{equation}
where $Y^{*}$ is the set of target item term IDs. 
Let $\mu$ and $\sigma$ denote the mean and standard deviation of the group rewards -- the normalized advantage of the $g$-th rollout is computed as
$
\hat{A}^{(g)}=\frac{R^{(g)}-\mu}{\sigma+\epsilon}$,
where $\epsilon$ is a small constant employed for numerical stability.
We then update the policy with the GRPO objective as follows:
\begin{equation}
\mathcal{L}_{\mathrm{RL}}
=
-\frac{1}{G}\sum_{g=1}^{G}\sum_{k=1}^{|y^{(g)}|}
\min\Bigl(
\rho_{g,k}\hat{A}^{(g)},
\mathrm{clip}(\rho_{g,k},1-\eta,1+\eta)\hat{A}^{(g)}
\Bigr),
\end{equation}
where
$
\rho_{g,k}
=
\frac{P_{\theta}(y^{(g)}_{k}\mid x,y^{(g)}_{<k})}
{P_{\theta_{\mathrm{old}}}(y^{(g)}_{k}\mid x,y^{(g)}_{<k})}
$
is the token-level importance ratio, $\theta_{\mathrm{old}}$ is the policy before the update, and $\eta$ is the clipping coefficient.

This RL stage closes the critical gap between token-level imitation learning and item-level recommendation success. By optimizing the model directly within the constraints of the decoding trie, it becomes better equipped to utilize the trie-aware, variable-length representations.

\section{Experiments} \label{sec:experiment}
In this section, we will answer the following specific research questions with experiments.
\begin{itemize}[leftmargin=*]
\item \textbf{RQ1:} How does BONSAI compare with state-of-the-art traditional and GR methods in terms of recommendation accuracy?
\item \textbf{RQ2:} What is the contribution of each component in the two-stage training framework to overall model performance?
\item \textbf{RQ3:} How do BONSAI's trie-aware design choices (controlled branching factor, variable depth) affect decoding behavior and recommendation performance?
\item \textbf{RQ4:} Are BONSAI's structural design principles transferable to existing term-ID-based methods?
\item \textbf{RQ5:} To what extent does BONSAI's adaptive structure improve recommendation performance in cold-start settings for newly introduced items? ~(Results in \cref{app:RQ5})
\item \textbf{RQ6:} How does BONSAI scale with respect to backbone model size and training data volume?~(Results in \cref{app:RQ6})
\end{itemize}

\noindent \textbf{Baselines}: We evaluate BONSAI against a diverse set of traditional and state-of-the-art recommendation baselines, organized into four categories: traditional methods with atomic IDs, including SASRec~\citep{SASRec} and GRU4Rec~\citep{GRU4Rec}; train-from-scratch GR methods, including TIGER~\citep{TIGER} and HSTU~\citep{HSTU}; LLM-based GR methods with semantic IDs (SID), including MiniOneRec~\citep{minionerec} and OneRec-Think~\citep{onerec-think}; and LLM-based GR methods based on textual item representations, including IDGenRec~\citep{IDGenRec} and GRLM~\citep{TID}.

\noindent \textbf{Implementation Details}: We use Qwen3-32B as the LLM-based extractor for feature filtering. To reduce the computational cost of trie construction, we first partition the item set by metadata category and then apply the minimum set cover algorithm to each subset independently. \textit{To lower collision rate}, we append the first word of title to IDs with collisions~(a comparison of collision rate can be found in \cref{app:dataset}). For fair comparison across LLM-based methods, we standardize the backbone LLM as Qwen3-1.7B. In the SFT stage, we use an initial learning rate of $1\times10^{-4}$ with cosine decay. We then continue from the final SFT checkpoint for one epoch of RL using GRPO, with an initial learning rate of $1\times10^{-6}$ and a rollout beam size of 10. All models are trained on eight A100 80GB GPUs.

\noindent \textbf{Datasets and Metrics}: 
Consistent with prior research~\citep{onerec-think,TID}, evaluations are conducted on the Amazon Review datasets~\citep{SASRec}.
We use three subdatasets: {\small\textsf{Beauty}}, {\small\textsf{Sports and Outdoors}}, and {\small\textsf{Toys and Games}}. 
We also use Amazon-M2~\citep{AmazonM2} for data scaling experiments in \cref{app:RQ6}.
We use four metrics to evaluate the performance of the recommendation: Recall@5, Recall@10, NDCG@5 and NDCG@10.

\subsection{Overall Performance~(RQ1)}\label{subsec:overall_perf}
Table~\ref{tab:main_results} summarizes the overall performance. \textbf{BONSAI consistently and significantly outperforms all baselines across every dataset and evaluation metric}. Relative to the strongest baseline, GRLM~\citep{TID}, BONSAI achieves gains of \textbf{18.5\%--19.5\%} on {\small\textsf{Beauty}}, \textbf{16.0\%--16.8\%} on {\small\textsf{Sports}}, and \textbf{20.1\%--21.6\%} on {\small\textsf{Toys}}.

The results suggest three main observations. First, conventional sequential recommendation models, including SASRec and GRU4Rec, perform the worst overall. This is unsurprising, as they rely on atomic item IDs and therefore cannot leverage the semantic knowledge encoded in pretrained LLMs. Second, generative recommendation models trained from scratch, such as TIGER and HSTU, are more competitive than traditional methods in some settings, but still underperform LLM-based approaches. This indicates that the open-world knowledge captured by pretrained LLM backbones is highly beneficial for generative recommendation. Third, among LLM-based baselines, methods that use textual item representations, such as IDGenRec and GRLM, generally outperform SID-based methods. This is consistent with recent findings that textual representations better preserve item semantics and align more naturally with the native LLM vocabulary~\citep{liu2025understanding}.



\subsection{Contribution of Two-Stage Training~(RQ2)}\label{subsec:RL ablation}

We next examine the contribution of the proposed two-stage training strategy, which combines supervised fine-tuning (SFT) with reinforcement learning (RL). The results are shown in  Figure~\ref{fig:RL_impr}, where the RL improvement is calculated as 
$$
    \text{RL improvement}=\frac{\text{Recall@5 after RL}-\text{Recall@5 after SFT}}{\text{Recall@5 after SFT}}
$$
\noindent The results show that RL yields consistent gains over SFT alone across all datasets, with relative improvements of roughly \textbf{6\%--10\%}. This suggests that \textbf{SFT already provides strong performance which surpasses all baselines in \cref{tab:main_results}, while RL further improves performance consistently on top of it}. 

Furthermore, the gains are mostly from items with long representations~(\textbf{Top 50\% Long Repr}) and items with multiple valid representations~(\textbf{Multi-Repr}). Notably, these subsets overlap substantially, particularly in \textbf{Multi-Repr+Top 50\% Long}. This pattern is closely aligned with our design motivation: RL helps the model better handle the multiple valid representations associated with an item.

Overall, these results highlight the importance of RL in our framework. By bridging the gap between token-level imitation learning and item-level recommendation performance, RL improves the model's ability to handle the more challenging cases introduced by trie-aware variable-length representations. The two-stage training procedure is therefore well matched to the structure of BONSAI and is a key factor in its strong empirical performance.

\begin{figure}[t]
\begin{center}

\includegraphics[width=0.35\textwidth]{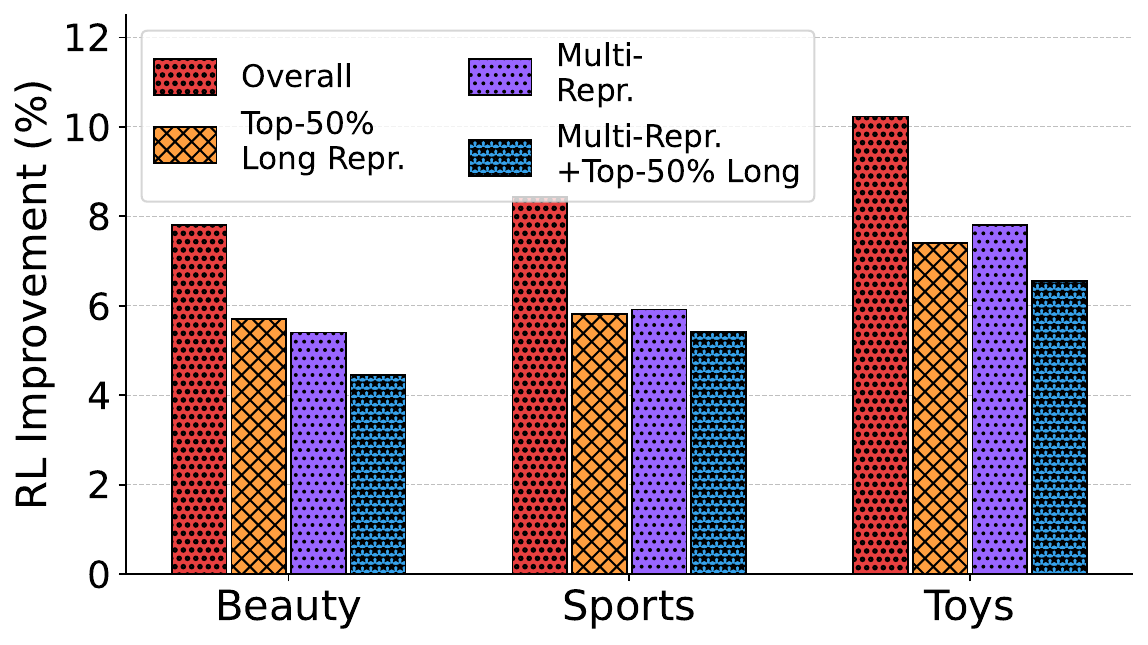}
\vspace{-1.5em}
\caption{The improvements from RL over SFT. Most improvement comes from the items with long or multiple IDs.}
\vspace{-1.5em}
\label{fig:RL_impr}
\end{center}

\end{figure}

\begin{figure}[t]
\begin{center}

\includegraphics[width=0.5\textwidth]{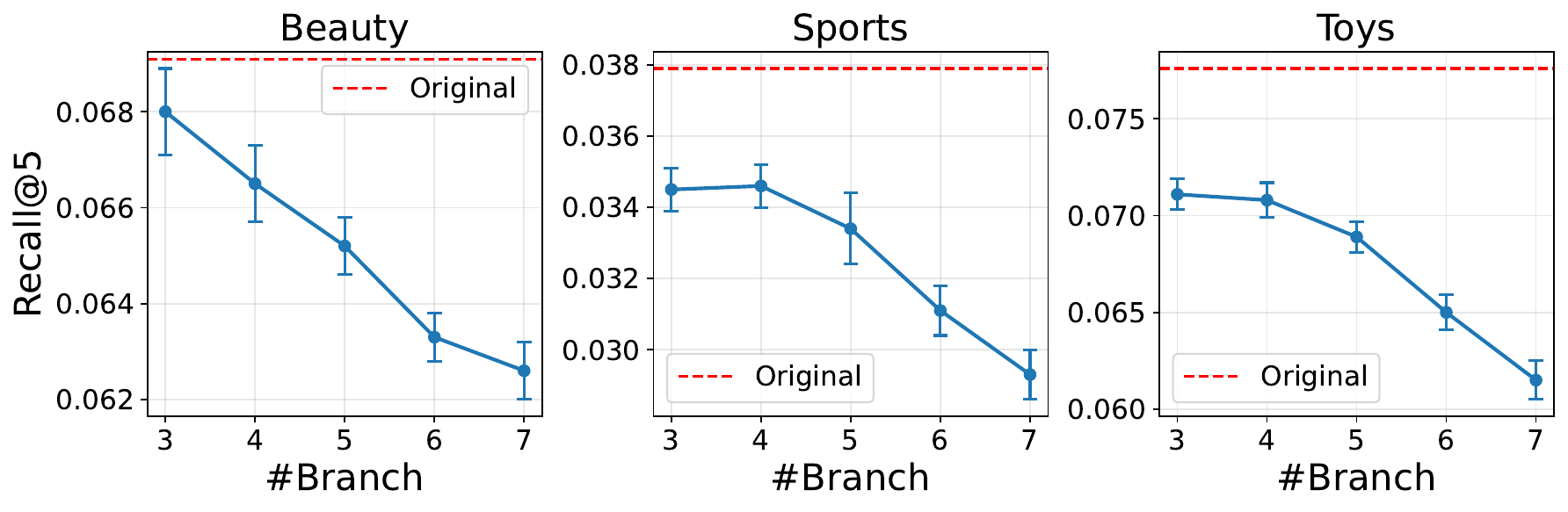}
\vspace{-2em}
\caption{Impacts of varying branching factor. The performance drops when the branching factor increases in general. }
\label{fig:abla_branch}
\vspace{-1.5em}
\end{center}
\end{figure}

\begin{figure}[t]
\begin{center}
\includegraphics[width=0.5\textwidth]{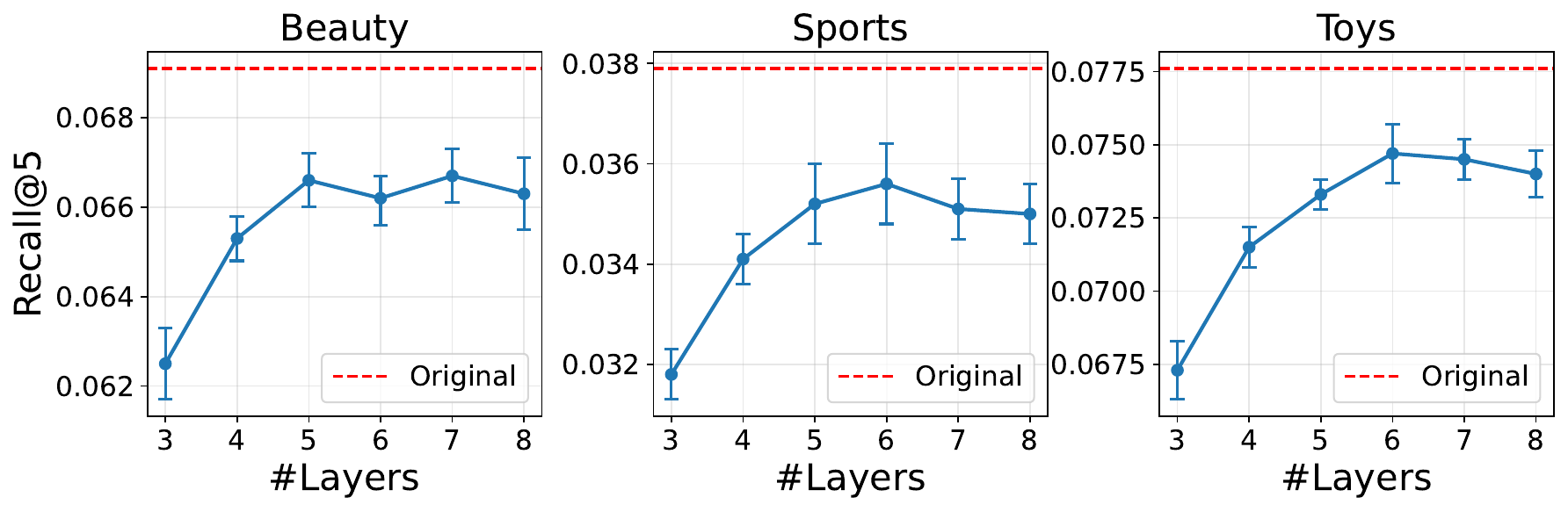}
\vspace{-2em}
\caption{Impacts of trie depth. We observe consistent gaps between the uniform-depth trie and the original one.}
\label{fig:abla_depth}
\vspace{-1.5em}
\end{center}
\end{figure}

\subsection{Contribution of Design Properties~(RQ3)}\label{subsec:trie_ablation}
To verify that the proposed trie-aware properties drive the performance gains, we conduct controlled ablations on two key factors: branching factor and variable depth. The results, shown in Figure~\ref{fig:abla_branch} and Figure~\ref{fig:abla_depth}, support our claim that \textbf{both a controlled branching factor and variable depth are crucial for strong LLM-based generative recommendation performance}.

We first study the effect of branching factor in Figure~\ref{fig:abla_branch}. In this experiment, we fix the trie depth to 4 and vary the branching factors of the intermediate layers by replacing the minimum-cover construction with a fixed-size cover-set scheme during trie construction. For example, to enforce a branching factor of 5, we select five features to partition the item subset at each level. At the final layer, we append unique suffixes to items assigned to the same leaf to avoid representation collisions. Across all three datasets, we observe a clear and consistent trend: \textit{recommendation performance degrades steadily as the branching factor increases}. This supports our core intuition that early decisions in constrained beam search are especially critical. When too many branches are available in shallow layers, the model is less likely to select the correct path, and an early mistake can remove the target item from the beam entirely. In contrast, our minimum-cover-based construction keeps the branching factor small, substantially simplifying search.

We next examine the effect of trie depth in Figure~\ref{fig:abla_depth} by truncating the original trie to produce shallower, more uniform structures. Performance generally improves as the trie becomes deeper, but the original trie consistently outperforms all truncated variants. This finding suggests that imposing a fixed-depth structure on all items is suboptimal. Items differ naturally in semantic richness and in the amount of information needed to distinguish them. A uniform depth therefore either over-compresses semantically rich items or imposes unnecessary structure on simpler ones. In contrast, BONSAI allows more semantically complex items to occupy deeper paths while assigning shorter representations to simpler items, yielding a better trade-off between representational fidelity and decoding difficulty.

\subsection{Transferability to Existing Methods~(RQ4)}\label{subsec:optimize_tid}

We next examine whether our proposed principles generalize beyond BONSAI. To do so, we reorganize the term IDs generated by a leading LLM-based GR method, GRLM~\citep{TID}, and report the results in Table~\ref{tab:TID_impr}. Specifically, we preserve the original keywords in each ID and reorder them using the minimum-cover-set strategy. The results show that \textbf{our strategy transfers effectively to other methods and yields consistent improvements}.

Specifically, Table~\ref{tab:TID_impr} shows that reorganizing the original term IDs makes the branching factors substantially more balanced across trie levels on all three datasets. In the original organization, the first trie level has an extremely large branching factor, making the initial decoding decision particularly difficult. Once the wrong branch is chosen at the first level, the remaining decoding steps can no longer recover the correct item. After applying our method, the first-level branching factor is reduced dramatically, while deeper levels become moderately wider in a more controlled way. This reorganization leads to consistent performance gains over the original results reported in \cref{tab:main_results}. These findings further support our claim that the decoding trie is a major performance bottleneck. Even when item semantics remain unchanged, simply reorganizing the same information into a more suitable trie can noticeably improve performance.



\begin{table}[t]
\centering
\caption{Experiments on reorganizing item term IDs from GRLM~\citep{TID}. "O" stands for the original trie, "M" stands for the modified trie with the minimum cover set. "Impr." indicates the performance improvement after reorganizing.}
\vspace{-1em}
\label{tab:TID_impr}
\small
\setlength{\tabcolsep}{2.5pt}
\renewcommand{\arraystretch}{1.2}
\begin{tabular}{l |c|ccccc|c}
\toprule
\multirow{2}{*}{\textbf{Dataset}} & \multirow{2}{*}{\textbf{Method}} & \multicolumn{5}{c|}{\textbf{Average Branching Factor}} & \multirow{2}{*}{\textbf{Impr.}} \\
\cline{3-7}
& & \textbf{Level 1} & \textbf{Level 2} & \textbf{Level 3} & \textbf{Level 4} & \textbf{Level 5} & \\
\midrule
\multirow{2}{*}{\textbf{Beauty}}
& O & 1114 & 5.39  & 1.70 & 1.12 & 1.04 & \multirow{2}{*}{8.9\%} \\
& M & 375  & 15.80 & 3.72 & 2.05 & 1.65 & \\
\midrule
\multirow{2}{*}{\textbf{Sports}}
& O & 2344 & 4.29  & 1.54 & 1.11 & 1.04 & \multirow{2}{*}{7.7\%} \\
& M & 508  & 15.32 & 3.59 & 1.90 & 1.57 & \\
\midrule
\multirow{2}{*}{\textbf{Toys}}
& O & 2085 & 3.75  & 1.39 & 1.06 & 1.02 & \multirow{2}{*}{9.1\%} \\
& M & 373  & 15.89 & 4.55 & 2.60 & 1.74 & \\
\bottomrule
\end{tabular}
\end{table}

\section{Conclusion}\label{sec:Conlusions}

In this work, we propose a new perspective for designing item IDs in generative recommendations. Our results suggest that future research should not treat item representation quality and decoding structure as separate problems; instead, they should be optimized jointly. 
There are still many topics to explore in this direction.
The possible future works include improving feature extraction methods, achieving better trade-offs between trie depth and width, and extending these principles to multi-modality item representations.

\bibliographystyle{ACM-Reference-Format}
\bibliography{main}


\begin{thebibliography}{56}


\ifx \showCODEN    \undefined \def \showCODEN     #1{\unskip}     \fi
\ifx \showISBNx    \undefined \def \showISBNx     #1{\unskip}     \fi
\ifx \showISBNxiii \undefined \def \showISBNxiii  #1{\unskip}     \fi
\ifx \showISSN     \undefined \def \showISSN      #1{\unskip}     \fi
\ifx \showLCCN     \undefined \def \showLCCN      #1{\unskip}     \fi
\ifx \shownote     \undefined \def \shownote      #1{#1}          \fi
\ifx \showarticletitle \undefined \def \showarticletitle #1{#1}   \fi
\ifx \showURL      \undefined \def \showURL       {\relax}        \fi
\providecommand\bibfield[2]{#2}
\providecommand\bibinfo[2]{#2}
\providecommand\natexlab[1]{#1}
\providecommand\showeprint[2][]{arXiv:#2}

\bibitem[Bao et~al\mbox{.}(2024)]%
        {decodingmatters}
\bibfield{author}{\bibinfo{person}{Keqin Bao}, \bibinfo{person}{Jizhi Zhang}, \bibinfo{person}{Yang Zhang}, \bibinfo{person}{Xinyue Huo}, \bibinfo{person}{Chong Chen}, {and} \bibinfo{person}{Fuli Feng}.} \bibinfo{year}{2024}\natexlab{}.
\newblock \showarticletitle{Decoding matters: Addressing amplification bias and homogeneity issue for llm-based recommendation}.
\newblock \bibinfo{journal}{\emph{arXiv preprint arXiv:2406.14900}} (\bibinfo{year}{2024}).
\newblock


\bibitem[Bao et~al\mbox{.}(2023)]%
        {TALLRec}
\bibfield{author}{\bibinfo{person}{Keqin Bao}, \bibinfo{person}{Jizhi Zhang}, \bibinfo{person}{Yang Zhang}, \bibinfo{person}{Wenjie Wang}, \bibinfo{person}{Fuli Feng}, {and} \bibinfo{person}{Xiangnan He}.} \bibinfo{year}{2023}\natexlab{}.
\newblock \showarticletitle{TALLRec: An Effective and Efficient Tuning Framework to Align Large Language Model with Recommendation}. In \bibinfo{booktitle}{\emph{RecSys}}.
\newblock


\bibitem[Bertsekas(2012)]%
        {bertsekas2012dynamic}
\bibfield{author}{\bibinfo{person}{Dimitri Bertsekas}.} \bibinfo{year}{2012}\natexlab{}.
\newblock \bibinfo{booktitle}{\emph{Dynamic programming and optimal control: Volume I}}. Vol.~\bibinfo{volume}{4}.
\newblock \bibinfo{publisher}{Athena scientific}.
\newblock


\bibitem[Chen et~al\mbox{.}(2024)]%
        {chen2024enhancing}
\bibfield{author}{\bibinfo{person}{Runjin Chen}, \bibinfo{person}{Mingxuan Ju}, \bibinfo{person}{Ngoc Bui}, \bibinfo{person}{Dimosthenis Antypas}, \bibinfo{person}{Stanley Cai}, \bibinfo{person}{Xiaopeng Wu}, \bibinfo{person}{Leonardo Neves}, \bibinfo{person}{Zhangyang Wang}, \bibinfo{person}{Neil Shah}, {and} \bibinfo{person}{Tong Zhao}.} \bibinfo{year}{2024}\natexlab{}.
\newblock \showarticletitle{Enhancing item tokenization for generative recommendation through self-improvement}.
\newblock \bibinfo{journal}{\emph{arXiv preprint arXiv:2412.17171}} (\bibinfo{year}{2024}).
\newblock


\bibitem[Chen et~al\mbox{.}(2026)]%
        {memrec}
\bibfield{author}{\bibinfo{person}{Weixin Chen}, \bibinfo{person}{Yuhan Zhao}, \bibinfo{person}{Jingyuan Huang}, \bibinfo{person}{Zihe Ye}, \bibinfo{person}{Clark~Mingxuan Ju}, \bibinfo{person}{Tong Zhao}, \bibinfo{person}{Neil Shah}, \bibinfo{person}{Li Chen}, {and} \bibinfo{person}{Yongfeng Zhang}.} \bibinfo{year}{2026}\natexlab{}.
\newblock \showarticletitle{MemRec: Collaborative Memory-Augmented Agentic Recommender System}.
\newblock \bibinfo{journal}{\emph{arXiv preprint arXiv:2601.08816}} (\bibinfo{year}{2026}).
\newblock


\bibitem[Covington et~al\mbox{.}(2016)]%
        {covington2016deep}
\bibfield{author}{\bibinfo{person}{Paul Covington}, \bibinfo{person}{Jay Adams}, {and} \bibinfo{person}{Emre Sargin}.} \bibinfo{year}{2016}\natexlab{}.
\newblock \showarticletitle{Deep neural networks for youtube recommendations}. In \bibinfo{booktitle}{\emph{Proceedings of the 10th ACM conference on recommender systems}}. \bibinfo{pages}{191--198}.
\newblock


\bibitem[Deng et~al\mbox{.}(2025)]%
        {OneRec}
\bibfield{author}{\bibinfo{person}{Jiaxin Deng}, \bibinfo{person}{Shiyao Wang}, \bibinfo{person}{Kuo Cai}, \bibinfo{person}{Lejian Ren}, \bibinfo{person}{Qigen Hu}, \bibinfo{person}{Weifeng Ding}, \bibinfo{person}{Qiang Luo}, {and} \bibinfo{person}{Guorui Zhou}.} \bibinfo{year}{2025}\natexlab{}.
\newblock \showarticletitle{Onerec: Unifying retrieve and rank with generative recommender and iterative preference alignment}.
\newblock \bibinfo{journal}{\emph{arXiv preprint arXiv:2502.18965}} (\bibinfo{year}{2025}).
\newblock


\bibitem[Ding et~al\mbox{.}(2026)]%
        {ultra-hstu}
\bibfield{author}{\bibinfo{person}{Qin Ding}, \bibinfo{person}{Kevin Course}, \bibinfo{person}{Linjian Ma}, \bibinfo{person}{Jianhui Sun}, \bibinfo{person}{Rouchen Liu}, \bibinfo{person}{Zhao Zhu}, \bibinfo{person}{Chunxing Yin}, \bibinfo{person}{Wei Li}, \bibinfo{person}{Dai Li}, \bibinfo{person}{Yu Shi}, {et~al\mbox{.}}} \bibinfo{year}{2026}\natexlab{}.
\newblock \showarticletitle{Bending the Scaling Law Curve in Large-Scale Recommendation Systems}.
\newblock \bibinfo{journal}{\emph{arXiv preprint arXiv:2602.16986}} (\bibinfo{year}{2026}).
\newblock


\bibitem[Geng et~al\mbox{.}(2022)]%
        {P5}
\bibfield{author}{\bibinfo{person}{Shijie Geng}, \bibinfo{person}{Shuchang Liu}, \bibinfo{person}{Zuohui Fu}, \bibinfo{person}{Yingqiang Ge}, {and} \bibinfo{person}{Yongfeng Zhang}.} \bibinfo{year}{2022}\natexlab{}.
\newblock \showarticletitle{Recommendation as Language Processing {(RLP):} {A} Unified Pretrain, Personalized Prompt {\&} Predict Paradigm {(P5)}}. In \bibinfo{booktitle}{\emph{RecSys}}.
\newblock


\bibitem[Guo et~al\mbox{.}(2025)]%
        {deepseek-R1}
\bibfield{author}{\bibinfo{person}{Daya Guo}, \bibinfo{person}{Dejian Yang}, \bibinfo{person}{Haowei Zhang}, \bibinfo{person}{Junxiao Song}, \bibinfo{person}{Peiyi Wang}, \bibinfo{person}{Qihao Zhu}, \bibinfo{person}{Runxin Xu}, \bibinfo{person}{Ruoyu Zhang}, \bibinfo{person}{Shirong Ma}, \bibinfo{person}{Xiao Bi}, {et~al\mbox{.}}} \bibinfo{year}{2025}\natexlab{}.
\newblock \showarticletitle{Deepseek-r1: Incentivizing reasoning capability in llms via reinforcement learning}.
\newblock \bibinfo{journal}{\emph{arXiv preprint arXiv:2501.12948}} (\bibinfo{year}{2025}).
\newblock


\bibitem[He et~al\mbox{.}(2025)]%
        {plum}
\bibfield{author}{\bibinfo{person}{Ruining He}, \bibinfo{person}{Lukasz Heldt}, \bibinfo{person}{Lichan Hong}, \bibinfo{person}{Raghunandan Keshavan}, \bibinfo{person}{Shifan Mao}, \bibinfo{person}{Nikhil Mehta}, \bibinfo{person}{Zhengyang Su}, \bibinfo{person}{Alicia Tsai}, \bibinfo{person}{Yueqi Wang}, \bibinfo{person}{Shao-Chuan Wang}, {et~al\mbox{.}}} \bibinfo{year}{2025}\natexlab{}.
\newblock \showarticletitle{Plum: Adapting pre-trained language models for industrial-scale generative recommendations}.
\newblock \bibinfo{journal}{\emph{arXiv preprint arXiv:2510.07784}} (\bibinfo{year}{2025}).
\newblock


\bibitem[Hong et~al\mbox{.}(2025)]%
        {Gream}
\bibfield{author}{\bibinfo{person}{Minjie Hong}, \bibinfo{person}{Zetong Zhou}, \bibinfo{person}{Zirun Guo}, \bibinfo{person}{Ziang Zhang}, \bibinfo{person}{Ruofan Hu}, \bibinfo{person}{Weinan Gan}, \bibinfo{person}{Jieming Zhu}, {and} \bibinfo{person}{Zhou Zhao}.} \bibinfo{year}{2025}\natexlab{}.
\newblock \showarticletitle{Generative Reasoning Recommendation via LLMs}.
\newblock \bibinfo{journal}{\emph{arXiv preprint arXiv:2510.20815}} (\bibinfo{year}{2025}).
\newblock


\bibitem[Hou et~al\mbox{.}(2025)]%
        {hou2025actionpiece}
\bibfield{author}{\bibinfo{person}{Yupeng Hou}, \bibinfo{person}{Jianmo Ni}, \bibinfo{person}{Zhankui He}, \bibinfo{person}{Noveen Sachdeva}, \bibinfo{person}{Wang-Cheng Kang}, \bibinfo{person}{Ed~H Chi}, \bibinfo{person}{Julian McAuley}, {and} \bibinfo{person}{Derek~Zhiyuan Cheng}.} \bibinfo{year}{2025}\natexlab{}.
\newblock \showarticletitle{ActionPiece: Contextually Tokenizing Action Sequences for Generative Recommendation}.
\newblock \bibinfo{journal}{\emph{arXiv preprint arXiv:2502.13581}} (\bibinfo{year}{2025}).
\newblock


\bibitem[Hu et~al\mbox{.}(2026)]%
        {hu2026ids}
\bibfield{author}{\bibinfo{person}{Peiyu Hu}, \bibinfo{person}{Wayne Lu}, {and} \bibinfo{person}{Jia Wang}.} \bibinfo{year}{2026}\natexlab{}.
\newblock \showarticletitle{From ids to semantics: A generative framework for cross-domain recommendation with adaptive semantic tokenization}. In \bibinfo{booktitle}{\emph{Proceedings of the AAAI Conference on Artificial Intelligence}}, Vol.~\bibinfo{volume}{40}. \bibinfo{pages}{14874--14882}.
\newblock


\bibitem[Hua et~al\mbox{.}(2023)]%
        {how-to-index}
\bibfield{author}{\bibinfo{person}{Wenyue Hua}, \bibinfo{person}{Shuyuan Xu}, \bibinfo{person}{Yingqiang Ge}, {and} \bibinfo{person}{Yongfeng Zhang}.} \bibinfo{year}{2023}\natexlab{}.
\newblock \showarticletitle{How to Index Item IDs for Recommendation Foundation Models}. In \bibinfo{booktitle}{\emph{SIGIR-AP}}, \bibfield{editor}{\bibinfo{person}{Qingyao Ai}, \bibinfo{person}{Yiqin Liu}, \bibinfo{person}{Alistair Moffat}, \bibinfo{person}{Xuanjing Huang}, \bibinfo{person}{Tetsuya Sakai}, {and} \bibinfo{person}{Justin Zobel}} (Eds.).
\newblock


\bibitem[Huang et~al\mbox{.}(2020)]%
        {huang2020embedding}
\bibfield{author}{\bibinfo{person}{Jui-Ting Huang}, \bibinfo{person}{Ashish Sharma}, \bibinfo{person}{Shuying Sun}, \bibinfo{person}{Li Xia}, \bibinfo{person}{David Zhang}, \bibinfo{person}{Philip Pronin}, \bibinfo{person}{Janani Padmanabhan}, \bibinfo{person}{Giuseppe Ottaviano}, {and} \bibinfo{person}{Linjun Yang}.} \bibinfo{year}{2020}\natexlab{}.
\newblock \showarticletitle{Embedding-based retrieval in facebook search}. In \bibinfo{booktitle}{\emph{Proceedings of the 26th ACM SIGKDD International Conference on Knowledge Discovery \& Data Mining}}. \bibinfo{pages}{2553--2561}.
\newblock


\bibitem[Jannach and Ludewig(2017)]%
        {GRU4Rec}
\bibfield{author}{\bibinfo{person}{Dietmar Jannach} {and} \bibinfo{person}{Malte Ludewig}.} \bibinfo{year}{2017}\natexlab{}.
\newblock \showarticletitle{When Recurrent Neural Networks meet the Neighborhood for Session-Based Recommendation}. In \bibinfo{booktitle}{\emph{RecSys}}, \bibfield{editor}{\bibinfo{person}{Paolo Cremonesi}, \bibinfo{person}{Francesco Ricci}, \bibinfo{person}{Shlomo Berkovsky}, {and} \bibinfo{person}{Alexander Tuzhilin}} (Eds.). \bibinfo{publisher}{ACM}.
\newblock


\bibitem[Ji et~al\mbox{.}(2024)]%
        {GenRec}
\bibfield{author}{\bibinfo{person}{Jianchao Ji}, \bibinfo{person}{Zelong Li}, \bibinfo{person}{Shuyuan Xu}, \bibinfo{person}{Wenyue Hua}, \bibinfo{person}{Yingqiang Ge}, \bibinfo{person}{Juntao Tan}, {and} \bibinfo{person}{Yongfeng Zhang}.} \bibinfo{year}{2024}\natexlab{}.
\newblock \showarticletitle{Genrec: Large language model for generative recommendation}. In \bibinfo{booktitle}{\emph{European Conference on Information Retrieval}}. Springer, \bibinfo{pages}{494--502}.
\newblock


\bibitem[Jin et~al\mbox{.}(2023)]%
        {AmazonM2}
\bibfield{author}{\bibinfo{person}{Wei Jin}, \bibinfo{person}{Haitao Mao}, \bibinfo{person}{Zheng Li}, \bibinfo{person}{Haoming Jiang}, \bibinfo{person}{Chen Luo}, \bibinfo{person}{Hongzhi Wen}, \bibinfo{person}{Haoyu Han}, \bibinfo{person}{Hanqing Lu}, \bibinfo{person}{Zhengyang Wang}, \bibinfo{person}{Ruirui Li}, {et~al\mbox{.}}} \bibinfo{year}{2023}\natexlab{}.
\newblock \showarticletitle{Amazon-m2: A multilingual multi-locale shopping session dataset for recommendation and text generation}.
\newblock \bibinfo{journal}{\emph{Advances in Neural Information Processing Systems}}  \bibinfo{volume}{36} (\bibinfo{year}{2023}), \bibinfo{pages}{8006--8026}.
\newblock


\bibitem[Ju et~al\mbox{.}(2025a)]%
        {ju2025generative}
\bibfield{author}{\bibinfo{person}{Clark~Mingxuan Ju}, \bibinfo{person}{Liam Collins}, \bibinfo{person}{Leonardo Neves}, \bibinfo{person}{Bhuvesh Kumar}, \bibinfo{person}{Louis~Yufeng Wang}, \bibinfo{person}{Tong Zhao}, {and} \bibinfo{person}{Neil Shah}.} \bibinfo{year}{2025}\natexlab{a}.
\newblock \showarticletitle{Generative Recommendation with Semantic IDs: A Practitioner's Handbook}.
\newblock \bibinfo{journal}{\emph{arXiv preprint arXiv:2507.22224}} (\bibinfo{year}{2025}).
\newblock


\bibitem[Ju et~al\mbox{.}(2025b)]%
        {ju2025revisiting}
\bibfield{author}{\bibinfo{person}{Clark~Mingxuan Ju}, \bibinfo{person}{Leonardo Neves}, \bibinfo{person}{Bhuvesh Kumar}, \bibinfo{person}{Liam Collins}, \bibinfo{person}{Tong Zhao}, \bibinfo{person}{Yuwei Qiu}, \bibinfo{person}{Qing Dou}, \bibinfo{person}{Sohail Nizam}, \bibinfo{person}{Sen Yang}, {and} \bibinfo{person}{Neil Shah}.} \bibinfo{year}{2025}\natexlab{b}.
\newblock \showarticletitle{Revisiting Self-attention for Cross-domain Sequential Recommendation}. In \bibinfo{booktitle}{\emph{Proceedings of the 31st ACM SIGKDD Conference on Knowledge Discovery and Data Mining V. 2}}. \bibinfo{pages}{1094--1105}.
\newblock


\bibitem[Ju et~al\mbox{.}(2025c)]%
        {ju2025learning}
\bibfield{author}{\bibinfo{person}{Clark~Mingxuan Ju}, \bibinfo{person}{Leonardo Neves}, \bibinfo{person}{Bhuvesh Kumar}, \bibinfo{person}{Liam Collins}, \bibinfo{person}{Tong Zhao}, \bibinfo{person}{Yuwei Qiu}, \bibinfo{person}{Qing Dou}, \bibinfo{person}{Yang Zhou}, \bibinfo{person}{Sohail Nizam}, \bibinfo{person}{Rengim Ozturk}, {et~al\mbox{.}}} \bibinfo{year}{2025}\natexlab{c}.
\newblock \showarticletitle{Learning Universal User Representations Leveraging Cross-domain User Intent at Snapchat}.
\newblock \bibinfo{journal}{\emph{arXiv preprint arXiv:2504.21838}} (\bibinfo{year}{2025}).
\newblock


\bibitem[Ju et~al\mbox{.}(2026)]%
        {ju2026semantic}
\bibfield{author}{\bibinfo{person}{Clark~Mingxuan Ju}, \bibinfo{person}{Tong Zhao}, \bibinfo{person}{Leonardo Neves}, \bibinfo{person}{Liam Collins}, \bibinfo{person}{Bhuvesh Kumar}, \bibinfo{person}{Jiwen Ren}, \bibinfo{person}{Lili Zhang}, \bibinfo{person}{Wenfeng Zhuo}, \bibinfo{person}{Vincent Zhang}, \bibinfo{person}{Xiao Bai}, {et~al\mbox{.}}} \bibinfo{year}{2026}\natexlab{}.
\newblock \showarticletitle{Semantic IDs for Recommender Systems at Snapchat: Use Cases, Technical Challenges, and Design Choices}.
\newblock \bibinfo{journal}{\emph{arXiv preprint arXiv:2604.03949}} (\bibinfo{year}{2026}).
\newblock


\bibitem[Ju et~al\mbox{.}(2024)]%
        {ju2024does}
\bibfield{author}{\bibinfo{person}{Mingxuan Ju}, \bibinfo{person}{William Shiao}, \bibinfo{person}{Zhichun Guo}, \bibinfo{person}{Yanfang Ye}, \bibinfo{person}{Yozen Liu}, \bibinfo{person}{Neil Shah}, {and} \bibinfo{person}{Tong Zhao}.} \bibinfo{year}{2024}\natexlab{}.
\newblock \showarticletitle{How does message passing improve collaborative filtering?}
\newblock \bibinfo{journal}{\emph{Advances in Neural Information Processing Systems}}  \bibinfo{volume}{37} (\bibinfo{year}{2024}), \bibinfo{pages}{8760--8784}.
\newblock


\bibitem[Kang and McAuley(2018)]%
        {SASRec}
\bibfield{author}{\bibinfo{person}{Wang{-}Cheng Kang} {and} \bibinfo{person}{Julian~J. McAuley}.} \bibinfo{year}{2018}\natexlab{}.
\newblock \showarticletitle{Self-Attentive Sequential Recommendation}. In \bibinfo{booktitle}{\emph{ICDM}}.
\newblock


\bibitem[Kong et~al\mbox{.}(2025)]%
        {minionerec}
\bibfield{author}{\bibinfo{person}{Xiaoyu Kong}, \bibinfo{person}{Leheng Sheng}, \bibinfo{person}{Junfei Tan}, \bibinfo{person}{Yuxin Chen}, \bibinfo{person}{Jiancan Wu}, \bibinfo{person}{An Zhang}, \bibinfo{person}{Xiang Wang}, {and} \bibinfo{person}{Xiangnan He}.} \bibinfo{year}{2025}\natexlab{}.
\newblock \showarticletitle{Minionerec: An open-source framework for scaling generative recommendation}.
\newblock \bibinfo{journal}{\emph{arXiv preprint arXiv:2510.24431}} (\bibinfo{year}{2025}).
\newblock


\bibitem[Lee et~al\mbox{.}(2022)]%
        {RQ22}
\bibfield{author}{\bibinfo{person}{Doyup Lee}, \bibinfo{person}{Chiheon Kim}, \bibinfo{person}{Saehoon Kim}, \bibinfo{person}{Minsu Cho}, {and} \bibinfo{person}{Wook-Shin Han}.} \bibinfo{year}{2022}\natexlab{}.
\newblock \showarticletitle{Autoregressive image generation using residual quantization}. In \bibinfo{booktitle}{\emph{Proceedings of the IEEE/CVF conference on computer vision and pattern recognition}}. \bibinfo{pages}{11523--11532}.
\newblock


\bibitem[Li et~al\mbox{.}(2023)]%
        {LLM4GenRec}
\bibfield{author}{\bibinfo{person}{Lei Li}, \bibinfo{person}{Yongfeng Zhang}, \bibinfo{person}{Dugang Liu}, {and} \bibinfo{person}{Li Chen}.} \bibinfo{year}{2023}\natexlab{}.
\newblock \showarticletitle{Large Language Models for Generative Recommendation: {A} Survey and Visionary Discussions}.
\newblock \bibinfo{journal}{\emph{CoRR}}  \bibinfo{volume}{abs/2309.01157} (\bibinfo{year}{2023}).
\newblock


\bibitem[Liao et~al\mbox{.}(2024)]%
        {LLaRA}
\bibfield{author}{\bibinfo{person}{Jiayi Liao}, \bibinfo{person}{Sihang Li}, \bibinfo{person}{Zhengyi Yang}, \bibinfo{person}{Jiancan Wu}, \bibinfo{person}{Yancheng Yuan}, \bibinfo{person}{Xiang Wang}, {and} \bibinfo{person}{Xiangnan He}.} \bibinfo{year}{2024}\natexlab{}.
\newblock \showarticletitle{LLaRA: Large Language-Recommendation Assistant}. In \bibinfo{booktitle}{\emph{SIGIR}}.
\newblock


\bibitem[Liu et~al\mbox{.}(2025a)]%
        {liu2025understanding}
\bibfield{author}{\bibinfo{person}{Jingzhe Liu}, \bibinfo{person}{Liam Collins}, \bibinfo{person}{Jiliang Tang}, \bibinfo{person}{Tong Zhao}, \bibinfo{person}{Neil Shah}, {and} \bibinfo{person}{Clark~Mingxuan Ju}.} \bibinfo{year}{2025}\natexlab{a}.
\newblock \showarticletitle{Understanding generative recommendation with semantic ids from a model-scaling view}.
\newblock \bibinfo{journal}{\emph{arXiv preprint arXiv:2509.25522}} (\bibinfo{year}{2025}).
\newblock


\bibitem[Liu et~al\mbox{.}(2025c)]%
        {liu2025bridge}
\bibfield{author}{\bibinfo{person}{Qidong Liu}, \bibinfo{person}{Xiangyu Zhao}, \bibinfo{person}{Yejing Wang}, \bibinfo{person}{Zijian Zhang}, \bibinfo{person}{Howard Zhong}, \bibinfo{person}{Chong Chen}, \bibinfo{person}{Xiang Li}, \bibinfo{person}{Wei Huang}, {and} \bibinfo{person}{Feng Tian}.} \bibinfo{year}{2025}\natexlab{c}.
\newblock \showarticletitle{Bridge the domains: Large language models enhanced cross-domain sequential recommendation}. In \bibinfo{booktitle}{\emph{Proceedings of the 48th International ACM SIGIR Conference on Research and Development in Information Retrieval}}. \bibinfo{pages}{1582--1592}.
\newblock


\bibitem[Liu et~al\mbox{.}(2025b)]%
        {onerec-think}
\bibfield{author}{\bibinfo{person}{Zhanyu Liu}, \bibinfo{person}{Shiyao Wang}, \bibinfo{person}{Xingmei Wang}, \bibinfo{person}{Rongzhou Zhang}, \bibinfo{person}{Jiaxin Deng}, \bibinfo{person}{Honghui Bao}, \bibinfo{person}{Jinghao Zhang}, \bibinfo{person}{Wuchao Li}, \bibinfo{person}{Pengfei Zheng}, \bibinfo{person}{Xiangyu Wu}, {et~al\mbox{.}}} \bibinfo{year}{2025}\natexlab{b}.
\newblock \showarticletitle{Onerec-think: In-text reasoning for generative recommendation}.
\newblock \bibinfo{journal}{\emph{arXiv preprint arXiv:2510.11639}} (\bibinfo{year}{2025}).
\newblock


\bibitem[OpenAI(2023)]%
        {GPT4}
\bibfield{author}{\bibinfo{person}{OpenAI}.} \bibinfo{year}{2023}\natexlab{}.
\newblock \showarticletitle{{GPT-4} Technical Report}.
\newblock \bibinfo{journal}{\emph{CoRR}} (\bibinfo{year}{2023}).
\newblock


\bibitem[Perron et~al\mbox{.}(2023)]%
        {CP-SAT}
\bibfield{author}{\bibinfo{person}{Laurent Perron}, \bibinfo{person}{Fr{\'e}d{\'e}ric Didier}, {and} \bibinfo{person}{Steven Gay}.} \bibinfo{year}{2023}\natexlab{}.
\newblock \showarticletitle{The CP-SAT-LP solver (invited talk)}. In \bibinfo{booktitle}{\emph{29th International Conference on Principles and Practice of Constraint Programming (CP 2023)}}. Schloss Dagstuhl--Leibniz-Zentrum f{\"u}r Informatik, \bibinfo{pages}{3--1}.
\newblock


\bibitem[Rajput et~al\mbox{.}(2023)]%
        {TIGER}
\bibfield{author}{\bibinfo{person}{Shashank Rajput}, \bibinfo{person}{Nikhil Mehta}, \bibinfo{person}{Anima Singh}, \bibinfo{person}{Raghunandan~Hulikal Keshavan}, \bibinfo{person}{Trung Vu}, \bibinfo{person}{Lukasz Heldt}, \bibinfo{person}{Lichan Hong}, \bibinfo{person}{Yi Tay}, \bibinfo{person}{Vinh~Q. Tran}, \bibinfo{person}{Jonah Samost}, \bibinfo{person}{Maciej Kula}, \bibinfo{person}{Ed~H. Chi}, {and} \bibinfo{person}{Mahesh Sathiamoorthy}.} \bibinfo{year}{2023}\natexlab{}.
\newblock \showarticletitle{Recommender Systems with Generative Retrieval}. In \bibinfo{booktitle}{\emph{NeurIPS}}, \bibfield{editor}{\bibinfo{person}{Alice Oh}, \bibinfo{person}{Tristan Naumann}, \bibinfo{person}{Amir Globerson}, \bibinfo{person}{Kate Saenko}, \bibinfo{person}{Moritz Hardt}, {and} \bibinfo{person}{Sergey Levine}} (Eds.).
\newblock


\bibitem[Rendle et~al\mbox{.}(2012)]%
        {BPR}
\bibfield{author}{\bibinfo{person}{Steffen Rendle}, \bibinfo{person}{Christoph Freudenthaler}, \bibinfo{person}{Zeno Gantner}, {and} \bibinfo{person}{Lars Schmidt{-}Thieme}.} \bibinfo{year}{2012}\natexlab{}.
\newblock \showarticletitle{{BPR:} Bayesian Personalized Ranking from Implicit Feedback}.
\newblock \bibinfo{journal}{\emph{CoRR}}  \bibinfo{volume}{abs/1205.2618} (\bibinfo{year}{2012}).
\newblock


\bibitem[Shao et~al\mbox{.}(2024)]%
        {deepseekmath}
\bibfield{author}{\bibinfo{person}{Zhihong Shao}, \bibinfo{person}{Peiyi Wang}, \bibinfo{person}{Qihao Zhu}, \bibinfo{person}{Runxin Xu}, \bibinfo{person}{Junxiao Song}, \bibinfo{person}{Xiao Bi}, \bibinfo{person}{Haowei Zhang}, \bibinfo{person}{Mingchuan Zhang}, \bibinfo{person}{YK Li}, \bibinfo{person}{Yang Wu}, {et~al\mbox{.}}} \bibinfo{year}{2024}\natexlab{}.
\newblock \showarticletitle{Deepseekmath: Pushing the limits of mathematical reasoning in open language models}.
\newblock \bibinfo{journal}{\emph{arXiv preprint arXiv:2402.03300}} (\bibinfo{year}{2024}).
\newblock


\bibitem[Su et~al\mbox{.}(2026)]%
        {vectorizing_beam_search}
\bibfield{author}{\bibinfo{person}{Zhengyang Su}, \bibinfo{person}{Isay Katsman}, \bibinfo{person}{Yueqi Wang}, \bibinfo{person}{Ruining He}, \bibinfo{person}{Lukasz Heldt}, \bibinfo{person}{Raghunandan Keshavan}, \bibinfo{person}{Shao-Chuan Wang}, \bibinfo{person}{Xinyang Yi}, \bibinfo{person}{Mingyan Gao}, \bibinfo{person}{Onkar Dalal}, {et~al\mbox{.}}} \bibinfo{year}{2026}\natexlab{}.
\newblock \showarticletitle{Vectorizing the Trie: Efficient Constrained Decoding for LLM-based Generative Retrieval on Accelerators}.
\newblock \bibinfo{journal}{\emph{arXiv preprint arXiv:2602.22647}} (\bibinfo{year}{2026}).
\newblock


\bibitem[Tan et~al\mbox{.}(2025)]%
        {ReRe}
\bibfield{author}{\bibinfo{person}{Junfei Tan}, \bibinfo{person}{Yuxin Chen}, \bibinfo{person}{An Zhang}, \bibinfo{person}{Junguang Jiang}, \bibinfo{person}{Bin Liu}, \bibinfo{person}{Ziru Xu}, \bibinfo{person}{Han Zhu}, \bibinfo{person}{Jian Xu}, \bibinfo{person}{Bo Zheng}, {and} \bibinfo{person}{Xiang Wang}.} \bibinfo{year}{2025}\natexlab{}.
\newblock \showarticletitle{Reinforced preference optimization for recommendation}.
\newblock \bibinfo{journal}{\emph{arXiv preprint arXiv:2510.12211}} (\bibinfo{year}{2025}).
\newblock


\bibitem[Tan et~al\mbox{.}(2024)]%
        {IDGenRec}
\bibfield{author}{\bibinfo{person}{Juntao Tan}, \bibinfo{person}{Shuyuan Xu}, \bibinfo{person}{Wenyue Hua}, \bibinfo{person}{Yingqiang Ge}, \bibinfo{person}{Zelong Li}, {and} \bibinfo{person}{Yongfeng Zhang}.} \bibinfo{year}{2024}\natexlab{}.
\newblock \showarticletitle{Idgenrec: Llm-recsys alignment with textual id learning}. In \bibinfo{booktitle}{\emph{Proceedings of the 47th International ACM SIGIR Conference on Research and Development in Information Retrieval}}. \bibinfo{pages}{355--364}.
\newblock


\bibitem[Xie et~al\mbox{.}(2026)]%
        {agentictagger}
\bibfield{author}{\bibinfo{person}{Zhouhang Xie}, \bibinfo{person}{Bo Peng}, \bibinfo{person}{Zhankui He}, \bibinfo{person}{Ziqi Chen}, \bibinfo{person}{Alice Han}, \bibinfo{person}{Isabella Ye}, \bibinfo{person}{Benjamin Coleman}, \bibinfo{person}{Noveen Sachdeva}, \bibinfo{person}{Fernando Pereira}, \bibinfo{person}{Julian McAuley}, {et~al\mbox{.}}} \bibinfo{year}{2026}\natexlab{}.
\newblock \showarticletitle{AgenticTagger: Structured Item Representation for Recommendation with LLM Agents}.
\newblock \bibinfo{journal}{\emph{arXiv preprint arXiv:2602.05945}} (\bibinfo{year}{2026}).
\newblock


\bibitem[Xu et~al\mbox{.}(2025)]%
        {iagent}
\bibfield{author}{\bibinfo{person}{Wujiang Xu}, \bibinfo{person}{Yunxiao Shi}, \bibinfo{person}{Zujie Liang}, \bibinfo{person}{Xuying Ning}, \bibinfo{person}{Kai Mei}, \bibinfo{person}{Kun Wang}, \bibinfo{person}{Xi Zhu}, \bibinfo{person}{Min Xu}, {and} \bibinfo{person}{Yongfeng Zhang}.} \bibinfo{year}{2025}\natexlab{}.
\newblock \showarticletitle{iagent: Llm agent as a shield between user and recommender systems}. In \bibinfo{booktitle}{\emph{Findings of the Association for Computational Linguistics: ACL 2025}}. \bibinfo{pages}{18056--18084}.
\newblock


\bibitem[Yang et~al\mbox{.}(2025)]%
        {qwen3}
\bibfield{author}{\bibinfo{person}{An Yang}, \bibinfo{person}{Anfeng Li}, \bibinfo{person}{Baosong Yang}, \bibinfo{person}{Beichen Zhang}, \bibinfo{person}{Binyuan Hui}, \bibinfo{person}{Bo Zheng}, \bibinfo{person}{Bowen Yu}, \bibinfo{person}{Chang Gao}, \bibinfo{person}{Chengen Huang}, \bibinfo{person}{Chenxu Lv}, {et~al\mbox{.}}} \bibinfo{year}{2025}\natexlab{}.
\newblock \showarticletitle{Qwen3 technical report}.
\newblock \bibinfo{journal}{\emph{arXiv preprint arXiv:2505.09388}} (\bibinfo{year}{2025}).
\newblock


\bibitem[Yang et~al\mbox{.}(2020)]%
        {yang2020mixed}
\bibfield{author}{\bibinfo{person}{Ji Yang}, \bibinfo{person}{Xinyang Yi}, \bibinfo{person}{Derek Zhiyuan~Cheng}, \bibinfo{person}{Lichan Hong}, \bibinfo{person}{Yang Li}, \bibinfo{person}{Simon Xiaoming~Wang}, \bibinfo{person}{Taibai Xu}, {and} \bibinfo{person}{Ed~H Chi}.} \bibinfo{year}{2020}\natexlab{}.
\newblock \showarticletitle{Mixed negative sampling for learning two-tower neural networks in recommendations}. In \bibinfo{booktitle}{\emph{Companion proceedings of the web conference 2020}}. \bibinfo{pages}{441--447}.
\newblock


\bibitem[Yi et~al\mbox{.}(2019)]%
        {yi2019sampling}
\bibfield{author}{\bibinfo{person}{Xinyang Yi}, \bibinfo{person}{Ji Yang}, \bibinfo{person}{Lichan Hong}, \bibinfo{person}{Derek~Zhiyuan Cheng}, \bibinfo{person}{Lukasz Heldt}, \bibinfo{person}{Aditee Kumthekar}, \bibinfo{person}{Zhe Zhao}, \bibinfo{person}{Li Wei}, {and} \bibinfo{person}{Ed Chi}.} \bibinfo{year}{2019}\natexlab{}.
\newblock \showarticletitle{Sampling-bias-corrected neural modeling for large corpus item recommendations}. In \bibinfo{booktitle}{\emph{Proceedings of the 13th ACM conference on recommender systems}}. \bibinfo{pages}{269--277}.
\newblock


\bibitem[Yue et~al\mbox{.}(2023)]%
        {LlamaRec}
\bibfield{author}{\bibinfo{person}{Zhenrui Yue}, \bibinfo{person}{Sara Rabhi}, \bibinfo{person}{Gabriel de Souza~Pereira Moreira}, \bibinfo{person}{Dong Wang}, {and} \bibinfo{person}{Even Oldridge}.} \bibinfo{year}{2023}\natexlab{}.
\newblock \showarticletitle{Llamarec: Two-stage recommendation using large language models for ranking}.
\newblock \bibinfo{journal}{\emph{arXiv preprint arXiv:2311.02089}} (\bibinfo{year}{2023}).
\newblock


\bibitem[Zhai et~al\mbox{.}(2024)]%
        {HSTU}
\bibfield{author}{\bibinfo{person}{Jiaqi Zhai}, \bibinfo{person}{Lucy Liao}, \bibinfo{person}{Xing Liu}, \bibinfo{person}{Yueming Wang}, \bibinfo{person}{Rui Li}, \bibinfo{person}{Xuan Cao}, \bibinfo{person}{Leon Gao}, \bibinfo{person}{Zhaojie Gong}, \bibinfo{person}{Fangda Gu}, \bibinfo{person}{Michael He}, {et~al\mbox{.}}} \bibinfo{year}{2024}\natexlab{}.
\newblock \showarticletitle{Actions speak louder than words: Trillion-parameter sequential transducers for generative recommendations}.
\newblock \bibinfo{journal}{\emph{arXiv preprint arXiv:2402.17152}} (\bibinfo{year}{2024}).
\newblock


\bibitem[Zhang et~al\mbox{.}(2023b)]%
        {agent4rec}
\bibfield{author}{\bibinfo{person}{An Zhang}, \bibinfo{person}{Leheng Sheng}, \bibinfo{person}{Yuxin Chen}, \bibinfo{person}{Hao Li}, \bibinfo{person}{Yang Deng}, \bibinfo{person}{Xiang Wang}, {and} \bibinfo{person}{Tat{-}Seng Chua}.} \bibinfo{year}{2023}\natexlab{b}.
\newblock \showarticletitle{On Generative Agents in Recommendation}.
\newblock \bibinfo{journal}{\emph{CoRR}}  \bibinfo{volume}{abs/2310.10108} (\bibinfo{year}{2023}).
\newblock


\bibitem[Zhang et~al\mbox{.}(2023a)]%
        {agentcf}
\bibfield{author}{\bibinfo{person}{Junjie Zhang}, \bibinfo{person}{Yupeng Hou}, \bibinfo{person}{Ruobing Xie}, \bibinfo{person}{Wenqi Sun}, \bibinfo{person}{Julian~J. McAuley}, \bibinfo{person}{Wayne~Xin Zhao}, \bibinfo{person}{Leyu Lin}, {and} \bibinfo{person}{Ji{-}Rong Wen}.} \bibinfo{year}{2023}\natexlab{a}.
\newblock \showarticletitle{AgentCF: Collaborative Learning with Autonomous Language Agents for Recommender Systems}.
\newblock \bibinfo{journal}{\emph{CoRR}}  \bibinfo{volume}{abs/2310.09233} (\bibinfo{year}{2023}).
\newblock


\bibitem[Zhang et~al\mbox{.}(2023c)]%
        {InstructRec}
\bibfield{author}{\bibinfo{person}{Junjie Zhang}, \bibinfo{person}{Ruobing Xie}, \bibinfo{person}{Yupeng Hou}, \bibinfo{person}{Wayne~Xin Zhao}, \bibinfo{person}{Leyu Lin}, {and} \bibinfo{person}{Ji{-}Rong Wen}.} \bibinfo{year}{2023}\natexlab{c}.
\newblock \showarticletitle{Recommendation as Instruction Following: {A} Large Language Model Empowered Recommendation Approach}.
\newblock \bibinfo{journal}{\emph{CoRR}}  \bibinfo{volume}{abs/2305.07001} (\bibinfo{year}{2023}).
\newblock


\bibitem[Zhang et~al\mbox{.}(2026a)]%
        {why-think-hurts}
\bibfield{author}{\bibinfo{person}{Luankang Zhang}, \bibinfo{person}{Yonghao Huang}, \bibinfo{person}{Hang Lv}, \bibinfo{person}{Mingjia Yin}, \bibinfo{person}{Liangyue Li}, \bibinfo{person}{Zulong Chen}, \bibinfo{person}{Hao Wang}, {and} \bibinfo{person}{Enhong Chen}.} \bibinfo{year}{2026}\natexlab{a}.
\newblock \showarticletitle{Why Thinking Hurts? Diagnosing and Rectifying the Reasoning Shift in Foundation Recommender Models}.
\newblock \bibinfo{journal}{\emph{arXiv preprint arXiv:2602.16587}} (\bibinfo{year}{2026}).
\newblock


\bibitem[Zhang et~al\mbox{.}({[n.\,d.]})]%
        {rec_ID_survey}
\bibfield{author}{\bibinfo{person}{Taiyan Zhang}, \bibinfo{person}{Hongtao Wang}, \bibinfo{person}{Yunqian Fan}, \bibinfo{person}{Kunda Yang}, \bibinfo{person}{Jichuan Zeng}, {and} \bibinfo{person}{Renchi Yang}.} \bibinfo{year}{[n.\,d.]}\natexlab{}.
\newblock \showarticletitle{A Survey of Item Identifiers in Generative Recommendation: Construction, Alignment, and Generation}.
\newblock  (\bibinfo{year}{[n.\,d.]}).
\newblock


\bibitem[Zhang et~al\mbox{.}(2026b)]%
        {TID}
\bibfield{author}{\bibinfo{person}{Zhiyang Zhang}, \bibinfo{person}{Junda She}, \bibinfo{person}{Kuo Cai}, \bibinfo{person}{Bo Chen}, \bibinfo{person}{Shiyao Wang}, \bibinfo{person}{Xinchen Luo}, \bibinfo{person}{Qiang Luo}, \bibinfo{person}{Ruiming Tang}, \bibinfo{person}{Han Li}, \bibinfo{person}{Kun Gai}, {et~al\mbox{.}}} \bibinfo{year}{2026}\natexlab{b}.
\newblock \showarticletitle{Unleashing the Native Recommendation Potential: LLM-Based Generative Recommendation via Structured Term Identifiers}.
\newblock \bibinfo{journal}{\emph{arXiv preprint arXiv:2601.06798}} (\bibinfo{year}{2026}).
\newblock


\bibitem[Zheng et~al\mbox{.}(2023)]%
        {LC-Rec}
\bibfield{author}{\bibinfo{person}{Bowen Zheng}, \bibinfo{person}{Yupeng Hou}, \bibinfo{person}{Hongyu Lu}, \bibinfo{person}{Yu Chen}, \bibinfo{person}{Wayne~Xin Zhao}, \bibinfo{person}{Ming Chen}, {and} \bibinfo{person}{Ji{-}Rong Wen}.} \bibinfo{year}{2023}\natexlab{}.
\newblock \showarticletitle{Adapting Large Language Models by Integrating Collaborative Semantics for Recommendation}.
\newblock \bibinfo{journal}{\emph{CoRR}}  \bibinfo{volume}{abs/2311.09049} (\bibinfo{year}{2023}).
\newblock


\bibitem[Zhou et~al\mbox{.}(2025a)]%
        {openonerec}
\bibfield{author}{\bibinfo{person}{Guorui Zhou}, \bibinfo{person}{Honghui Bao}, \bibinfo{person}{Jiaming Huang}, \bibinfo{person}{Jiaxin Deng}, \bibinfo{person}{Jinghao Zhang}, \bibinfo{person}{Junda She}, \bibinfo{person}{Kuo Cai}, \bibinfo{person}{Lejian Ren}, \bibinfo{person}{Lu Ren}, \bibinfo{person}{Qiang Luo}, {et~al\mbox{.}}} \bibinfo{year}{2025}\natexlab{a}.
\newblock \showarticletitle{OpenOneRec Technical Report}.
\newblock \bibinfo{journal}{\emph{arXiv preprint arXiv:2512.24762}} (\bibinfo{year}{2025}).
\newblock


\bibitem[Zhou et~al\mbox{.}(2025b)]%
        {OneRec-v2}
\bibfield{author}{\bibinfo{person}{Guorui Zhou}, \bibinfo{person}{Hengrui Hu}, \bibinfo{person}{Hongtao Cheng}, \bibinfo{person}{Huanjie Wang}, \bibinfo{person}{Jiaxin Deng}, \bibinfo{person}{Jinghao Zhang}, \bibinfo{person}{Kuo Cai}, \bibinfo{person}{Lejian Ren}, \bibinfo{person}{Lu Ren}, \bibinfo{person}{Liao Yu}, {et~al\mbox{.}}} \bibinfo{year}{2025}\natexlab{b}.
\newblock \showarticletitle{Onerec-v2 technical report}.
\newblock \bibinfo{journal}{\emph{arXiv preprint arXiv:2508.20900}} (\bibinfo{year}{2025}).
\newblock


\end{thebibliography}

\appendix

\section{LLM Prompt Samples}
\label{app:prompt}

\begin{prefbox}[LLM Prompts for Feature Filtering]
You are an expert product keywords extractor. Your task is to extract informative keywords from this product's metadata (only from the item title and description fields) and dump useless words. Please follow ALL guidelines carefully:

GUIDELINES:
1) WORD FORM: All words must be in their base form.
2) Only keep words that are informative to the product itself and useful for recommendation task. Dump generic or vague terms.
3) CONTENT FOCUS: words related to product category, function, features and selling point can be considered informative.
4) NO ADDITIONAL TEXT: Do not include any explanations, thoughts, or other content.

\end{prefbox}

Let $\langle r_{j}\rangle$ denote the trie path used to represent item $i_j$. A sample prompt is of the following form:
\begin{prefbox}[LLM Prompts for Sequential Recommendation]
\textbf{Input:} The user has interacted with the following items in chronological order:
$\langle r_{1}\rangle,\langle r_{2}\rangle,\ldots,\langle r_{t}\rangle$.
Please predict the next item the user may prefer.\\
\textbf{Response:} $\langle r_{t+1}\rangle$
\end{prefbox}

\section{Complete Theoretical Analysis}\label{app:proof}

In this appendix, we provide the complete forms and proofs of
Theorems.
Throughout, we strictly follow the notation used in the main text.

\subsection{Additional Definitions and Regularity Conditions}

Let $I$ be the item set, and let $Y\in I$ be the target item with prior
\[
\Pr(Y=i)=p_i,\qquad p_i\ge 0,\qquad \sum_{i\in I}p_i=1.
\]
For any nonempty subset $U\subseteq I$, define
\[
p(U):=\sum_{i\in U}p_i.
\]

A feasible split of $U$ is a partition
\[
\mathcal P=\{C_1,\dots,C_k\}\in \Pi(U),
\]
where $C_j\neq \emptyset$, $C_j\subseteq U$, $C_a\cap C_b=\emptyset$ for $a\neq b$, and
\[
\bigcup_{j=1}^k C_j=U.
\]

Fix beam width $B\ge 1$, and define the one-step beam survival factor
\[
\phi_B(k)=\min\left(1,\frac{B}{k}\right).
\]

For any feasible tree $T$, let $d(i)$ denote the depth of leaf $i$, and let
$k_t(i)$ denote the branching factor encountered by item $i$ at depth $t$.
Its beam-aware correct rate is
\[
R(T)=\sum_{i\in I} p_i \prod_{t=1}^{d(i)} \phi_B\!\bigl(k_t(i)\bigr).
\]

For any nonempty subset $U\subseteq I$, the optimal continuation value is
\[
V(U):=\max_{T\text{ feasible on }U}\Pr(\mathrm{recover}\ Y\mid Y\in U,\;T).
\]

For any feasible split $\mathcal P=\{C_1,\dots,C_k\}\in\Pi(U)$, define
\[
\operatorname{Val}(U,\mathcal P)
:=
\phi_B(k)\sum_{j=1}^k \frac{p(C_j)}{p(U)}V(C_j).
\]

For \cref{thm:local-approx}, we use the following regularity conditions.

\begin{assumption}[Beam-limited regime]
\label{ass:beam-limited}
For every non-singleton subset $U\subseteq I$ and every feasible split
$\mathcal P\in\Pi(U)$,
\[
|\mathcal P|\ge B.
\]
Hence
\[
\phi_B(|\mathcal P|)=\frac{B}{|\mathcal P|}.
\]
\end{assumption}

\begin{assumption}[Bounded continuation heterogeneity]
\label{ass:heterogeneity}
There exists a constant $\rho\ge 1$ such that for every non-singleton subset
$U\subseteq I$ and every two feasible splits $\mathcal P,\mathcal Q\in\Pi(U)$,
\[
W(U,\mathcal P)\le \rho\, W(U,\mathcal Q),
\]
where
\[
W(U,\mathcal P):=
\sum_{C\in\mathcal P}\frac{p(C)}{p(U)}V(C).
\]
\end{assumption}

For \cref{thm:variable-depth}, we use the following condition.

\begin{assumption}[Nontrivial extension]
\label{ass:nontrivial-extension}
Whenever a shallower leaf is extended only to match the depth of a deeper leaf,
at least one added level has branching factor $k>B$. Equivalently, there exists
$\eta<1$ such that every such forced extension multiplies the corresponding leaf
success probability by at most $\eta$.
\end{assumption}

\subsection{Proof of Theorem 1}

We first restate the theorem in complete form.

\begin{customthm}{1}[Optimal tree construction]
\label{thm:optimal-tree-app}
For any nonempty subset $U\subseteq I$, $V(U)$ denotes the maximum achievable beam search success rate conditioned on $Y\in U$. For any feasible split $\mathcal P\in\Pi(U)$, $\operatorname{Val}(U,\mathcal P)$ denotes the success rate obtained by first applying split $\mathcal P$ at $U$ and then continuing optimally in each resulting child subset. For every non-singleton subset $U\subseteq I$,
\[
V(U)
=
\max_{\mathcal P\in \Pi(U)}
\operatorname{Val}(U,\mathcal P).
\]
Choosing at each subset $U$ a maximizing split in $\Pi(U)$ with Bellman dynamic programming yields a globally optimal tree, and the optimal correct rate on the full item set is $V(I)$.
\end{customthm}

\begin{proof}
We prove the Bellman recursion by induction on $|U|$.

\paragraph{Base case.}
If $U=\{i\}$ is a singleton, then the target item is already uniquely identified.
No further split is needed, so
\[
V(\{i\})=1.
\]

\paragraph{Inductive hypothesis.}
Assume that for every nonempty subset $W\subseteq I$ with $|W|<|U|$, the value
$V(W)$ is given by the Bellman recursion, and there exists a feasible subtree on
$W$ attaining $V(W)$.

\paragraph{Inductive step.}
Now consider any non-singleton subset $U\subseteq I$.
Take an arbitrary feasible tree on $U$. Its root must choose some feasible split
\[
\mathcal P=\{C_1,\dots,C_k\}\in\Pi(U).
\]
Conditioned on $Y\in U$, the beam keeps the correct child with probability
$\phi_B(k)$. Given survival at this level, the target lies in child $C_j$ with
conditional probability
\[
\Pr(Y\in C_j\mid Y\in U)=\frac{p(C_j)}{p(U)}.
\]
After entering $C_j$, the best achievable continuation value is $V(C_j)$ by the
inductive hypothesis. Therefore, the success rate of any tree whose first split is
$\mathcal P$ is at most
\[
\phi_B(k)\sum_{j=1}^k \frac{p(C_j)}{p(U)}V(C_j)
=
\operatorname{Val}(U,\mathcal P).
\]
Since this upper bound holds for every feasible first split $\mathcal P\in\Pi(U)$,
we obtain
\[
V(U)\le \max_{\mathcal P\in\Pi(U)} \operatorname{Val}(U,\mathcal P).
\]

To prove the reverse inequality, let
\[
\mathcal P^*(U)\in \arg\max_{\mathcal P\in\Pi(U)} \operatorname{Val}(U,\mathcal P).
\]
By the inductive hypothesis, for each child $C_j\in \mathcal P^*(U)$ there exists
a feasible subtree attaining $V(C_j)$. Construct a tree on $U$ by using
$\mathcal P^*(U)$ at the root and these optimal subtrees below each child.
The resulting success rate is exactly
\[
\operatorname{Val}(U,\mathcal P^*(U))
=
\max_{\mathcal P\in\Pi(U)} \operatorname{Val}(U,\mathcal P).
\]
Hence
\[
V(U)\ge \max_{\mathcal P\in\Pi(U)} \operatorname{Val}(U,\mathcal P).
\]

Combining the two inequalities yields
\[
V(U)
=
\max_{\mathcal P\in \Pi(U)}
\operatorname{Val}(U,\mathcal P).
\]

Thus, for every subset $U$, an optimal solution is obtained by choosing a maximizing
split and then solving each child subset optimally. Applying this recursion to the
full item set $I$ gives the Bellman dynamic programming solution, and the optimal
correct rate is $V(I)$.
\end{proof}

\subsection{Proof of Theorem 2}

We next restate the local approximation theorem in complete form.

\begin{customthm}{2}[Local approximation of greedy cover set]
\label{thm:local-approx-app}
Under Assumptions~\ref{ass:beam-limited}--\ref{ass:heterogeneity}, for every
non-singleton subset $U\subseteq I$, let
\[
\mathcal P^*(U)\in \arg\max_{\mathcal P\in\Pi(U)} \operatorname{Val}(U,\mathcal P)
\]
be a locally optimal split. Then the greedy set-cover split $\mathcal P^g(U)$ satisfies
\[
\operatorname{Val}(U,\mathcal P^g(U))
\ge
\frac{1}{\rho \alpha_U}
\operatorname{Val}(U,\mathcal P^*(U)),
\]
where $\rho\ge 1$ is the constant in Assumption~\ref{ass:heterogeneity}, and
$\alpha_U\ge 1$ satisfies
\[
|\mathcal P^g(U)|\le \alpha_U \min_{\mathcal P\in\Pi(U)} |\mathcal P|.
\]
Thus greedy cover set is a provable local approximation when the true prior
$\{p_i\}$ is unknown.
\end{customthm}

\begin{proof}
Fix a non-singleton subset $U\subseteq I$.

By Assumption~\ref{ass:beam-limited}, every feasible split $\mathcal P\in\Pi(U)$
satisfies $|\mathcal P|\ge B$, and therefore
\[
\phi_B(|\mathcal P|)=\frac{B}{|\mathcal P|}.
\]
Hence, for any feasible split $\mathcal P$,
\[
\operatorname{Val}(U,\mathcal P)
=
\frac{B}{|\mathcal P|}\,W(U,\mathcal P),
\]
where
\[
W(U,\mathcal P)=\sum_{C\in\mathcal P}\frac{p(C)}{p(U)}V(C).
\]

Applying this to the greedy split $\mathcal P^g(U)$ gives
\[
\operatorname{Val}(U,\mathcal P^g(U))
=
\frac{B}{|\mathcal P^g(U)|}\,W(U,\mathcal P^g(U)).
\]

By Assumption~\ref{ass:heterogeneity},
\[
W(U,\mathcal P^g(U))
\ge \frac{1}{\rho}W(U,\mathcal P^*(U)).
\]
Therefore,
\[
\operatorname{Val}(U,\mathcal P^g(U))
\ge
\frac{B}{|\mathcal P^g(U)|}\cdot \frac{1}{\rho}W(U,\mathcal P^*(U)).
\]

Now use the definition of $\alpha_U$:
\[
|\mathcal P^g(U)|
\le
\alpha_U \min_{\mathcal P\in\Pi(U)} |\mathcal P|.
\]
Since $\mathcal P^*(U)$ is itself a feasible split,
\[
\min_{\mathcal P\in\Pi(U)} |\mathcal P|
\le
|\mathcal P^*(U)|.
\]
Hence
\[
|\mathcal P^g(U)|\le \alpha_U |\mathcal P^*(U)|.
\]
Substituting this into the previous inequality yields
\[
\operatorname{Val}(U,\mathcal P^g(U))
\ge
\frac{B}{\alpha_U |\mathcal P^*(U)|}\cdot \frac{1}{\rho}W(U,\mathcal P^*(U)).
\]

On the other hand, again by Assumption~\ref{ass:beam-limited},
\[
\operatorname{Val}(U,\mathcal P^*(U))
=
\frac{B}{|\mathcal P^*(U)|}W(U,\mathcal P^*(U)).
\]
Combining the two displays gives
\[
\operatorname{Val}(U,\mathcal P^g(U))
\ge
\frac{1}{\rho\alpha_U}
\operatorname{Val}(U,\mathcal P^*(U)).
\]

This proves that the greedy cover-set split attains at least a
$1/(\rho\alpha_U)$ fraction of the optimal local beam-aware value.
\end{proof}

\subsection{Proof of Theorem 3}

We finally restate the variable-depth theorem in complete form.

\begin{customthm}{3}[Variable depth is preferable to fixed depth]
\label{thm:variable-depth-app}
Let $\mathcal T$ be the set of all feasible tries, and let
$\mathcal T_{\mathrm{eq}}\subseteq \mathcal T$ be the subset of feasible trees
whose leaves have the same depth. Then
\[
\max_{T\in\mathcal T_{\mathrm{eq}}}R(T)\le \max_{T\in\mathcal T}R(T).
\]
Moreover, under Assumption~\ref{ass:nontrivial-extension}, if a feasible tree $T$
has leaves of different depths, then any equal-depth tree obtained by extending the
shallower leaves of $T$ satisfies
\[
R(\widetilde T)<R(T).
\]
Therefore, both the optimal construction and the greedy approximation naturally admit
variable leaf depths, while forcing fixed depth will be suboptimal.
\end{customthm}

\begin{proof}
The first inequality is immediate from set inclusion:
\[
\mathcal T_{\mathrm{eq}}\subseteq \mathcal T.
\]
Therefore, maximizing the same objective $R(T)$ over the restricted family
$\mathcal T_{\mathrm{eq}}$ cannot exceed maximizing it over the full family
$\mathcal T$, i.e.,
\[
\max_{T\in\mathcal T_{\mathrm{eq}}}R(T)\le \max_{T\in\mathcal T}R(T).
\]

We now prove the strict inequality for a fixed variable-depth tree $T$.
Assume that $T$ has leaves of different depths. Let $L_{\max}$ be the maximum leaf
depth in $T$, and let $\widetilde T$ be any equal-depth tree obtained by extending
the shallower leaves of $T$ so that all leaves have depth $L_{\max}$.

Consider any item $i$ whose leaf depth in $T$ is strictly smaller than $L_{\max}$.
By construction, the path of $i$ in $\widetilde T$ must contain at least one added
level. By Assumption~\ref{ass:nontrivial-extension}, at least one of the added
levels has branching factor $k>B$, and therefore contributes a multiplicative factor
\[
\phi_B(k)=\frac{B}{k}\le \eta<1.
\]
Hence the success probability of item $i$ in $\widetilde T$ is strictly smaller than
that in $T$:
\[
\prod_{t=1}^{d_{\widetilde T}(i)} \phi_B\!\bigl(k_t^{\widetilde T}(i)\bigr)
<
\prod_{t=1}^{d_T(i)} \phi_B\!\bigl(k_t^{T}(i)\bigr).
\]

For any item $j$ whose original leaf depth in $T$ is already $L_{\max}$, the branch
is unchanged, so its success probability remains the same:
\[
\prod_{t=1}^{d_{\widetilde T}(j)} \phi_B\!\bigl(k_t^{\widetilde T}(j)\bigr)
=
\prod_{t=1}^{d_T(j)} \phi_B\!\bigl(k_t^{T}(j)\bigr).
\]

Therefore, term-by-term in the definition of $R(T)$, at least one positive-probability
item suffers a strict decrease while no item improves. Summing over all items yields
\[
R(\widetilde T)<R(T).
\]

This proves that once a variable-depth tree is given, forcing equal depth by extending
shallower leaves can only decrease the beam-aware success rate under the stated
nontrivial-extension condition.
\end{proof}

\section{Dataset Details}
\label{app:dataset}
\begin{table}[h]
\centering
\caption{Statistics of the datasets used in our experiments.}
\begin{tabular}{lrrr}
\toprule
\textbf{Dataset} & \textbf{\# users} & \textbf{\# items} & \textbf{\# actions}  \\
\midrule
\textit{Beauty} & 22,363 & 12,101 & 198,502 \\
\textit{Toys and Games} & 19,412 & 11,924 & 167,597 \\
\textit{Sports and Outdoors} & 35,598 & 18,357 & 296,337\\\bottomrule
\end{tabular}
\label{tab:dataset_stats}
\end{table}

\begin{table}[h]
\centering
\caption{Comparisons between the number of items and trie leaves~(item IDs). Our method does not significantly increase the number of item representations.}
\label{tab:leaves num}
\begin{tabular}{c|ccc}
\toprule
 & \textbf{Beauty} &\textbf{Sports}  & \textbf{Toys}  \\
\midrule
\#Items & 12,101 &18,357  &11,923  \\
\midrule
\#Trie Leaves &13,823  & 31,306 &15,824  \\
\bottomrule
\end{tabular}

\end{table}

\begin{table}[t]
\centering
\caption{ID Collision rate comparison across different textual ID formulations.}
\label{tab:col_ratio}
\small
\setlength{\tabcolsep}{6pt}
\renewcommand{\arraystretch}{1.2}
\begin{tabular}{lccc}
\toprule
\textbf{Dataset} & \textbf{Ours } & \textbf{GRLM} & \textbf{Title-10 token } \\
\midrule
\textbf{Beauty} & 3.38\% & 3.64\% & 1.63\% \\
\textbf{Sports} & 3.20\% & 4.30\% & 2.22\% \\
\textbf{Toys}   & 3.81\% & 2.29\% & 2.00\% \\
\bottomrule
\end{tabular}
\end{table}

\section{Performance in Cold-Start Settings~(RQ5)}\label{app:RQ5}

\begin{table}[t]
\caption{Experiments on the cold-start items. BONSAI shows strong capability to handle new items.}

\centering
\begin{tabular}{c|ccc}
\hline
       & OneRec-Think   & GRLM & BONSAI      \\ \hline
Beauty  & 0.028 & 0.032 &\textbf{0.038} \\ 
Sports  & 0.016 & 0.018  &\textbf{0.022} \\ 
Toys  & 0.028 & 0.034  & \textbf{0.038}\\ \hline
\end{tabular}
\label{tab:cold_start}
\end{table}

We next evaluate the cold-start capability of BONSAI. Based on the three datasets used in \cref{subsec:overall_perf}, we construct a cold-start test set by randomly selecting 500 items from each dataset, removing them from all training sequences, and evaluating model performance on these held-out items. We then train all models and report Recall@10 on the cold-start items.
As shown in Table~\ref{tab:cold_start}, BONSAI consistently achieves the best performance, outperforming OneRec-Think and GRLM, the two strongest baselines in this setting. The results suggest that BONSAI effectively leverages semantic understanding and its adaptive trie structure to mitigate the cold-start problem, leading to more accurate recommendations for newly introduced items.

\section{Scaling with Model and Data Size~(RQ6)}\label{app:RQ6}

\begin{figure}[t]
\begin{center}
\includegraphics[width=0.45\textwidth]{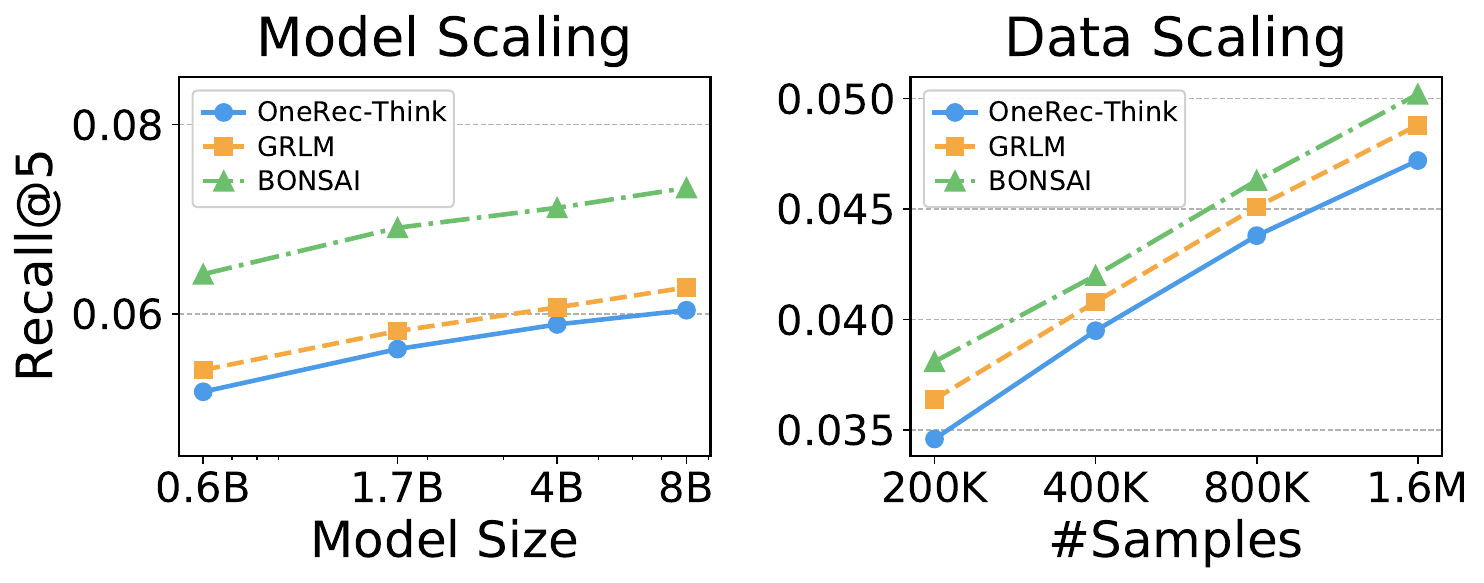}
\caption{Experiments on model and data scaling. BONSAI shows strong scaling behaviors consistently.}
\label{fig:scaling}
\end{center}
\end{figure}

We further examine the scaling behavior of BONSAI relative to state-of-the-art baselines. Specifically, for model scaling, we increase the backbone size from 0.6B to 8B using Qwen3 models~\citep{qwen3}. For data scaling, we increase the amount of training data using Amazon-M2~\citep{AmazonM2}, an industry-scale recommendation dataset. The results are shown in Figure~\ref{fig:scaling}.
BONSAI improves consistently under both model and data scaling, demonstrating a strong ability to benefit from larger backbones and richer training data. Notably, it maintains the best Recall@5 across all scaling settings, indicating that its advantage is robust rather than confined to a specific model size or data regime. These results suggest that BONSAI is highly compatible with scaling and can effectively translate additional model capacity and training data into steady performance gains.

\end{document}